\newtheorem{proposition}{Proposition}
\newtheorem{defin}{\bf Definition}
\newenvironment{proof}{\noindent{\bf Proof.}}{$\diamond$}
\def\ga{\mbox{Ga}}
\def\nbi{\mbox{NB}}
\def\geo{\mbox{Geo}}
\def\po{\mbox{Po}}
\def\un{\mbox{Un}}
\def\dir{\mbox{Dir}}
\def\E{\mbox{E}}
\def\V{\mbox{Var}}
\def\Cor{\mbox{Corr}}
\def\Cr{\mbox{Corr}}
\def\Cv{\mbox{Cov}}
\def\rest{\mbox{rest}}
\def\data{\mbox{data}}
\def\ba{{\bf a}}
\def\bx{{\bf x}}
\def\by{{\bf y}}
\def\bz{{\bf z}}
\def\bX{{\bf X}}
\def\bY{{\bf Y}}
\def\bZ{{\bf Z}}
\def\simind{\stackrel{\mbox{\scriptsize{ind}}}{\sim}}
\newcommand{\balpha}{\boldsymbol{\alpha}}
\newcommand{\bgamma}{\boldsymbol{\gamma}}
\newcommand{\bpi}{\boldsymbol{\pi}}
\newcommand{\btheta}{\boldsymbol{\theta}}
\newcommand{\NB}{\mathbb{N}}
\newcommand{\RB}{\mathbb{R}}
\begin{document}

\baselineskip=24pt

\title{\bf Negative binomial models for development triangles of counts}
\author{{\sc Luis E. Nieto-Barajas \& Rodrigo Targino} \\[2mm]
{\sl ITAM, Mexico \& FGV, Brazil} \\[2mm]
{\small {\tt luis.nieto@itam.mx \& rodrigo.targino@fgv.br}} \\}
\date{}
\maketitle

\begin{abstract}
Prediction of outstanding claims has been done via nonparametric models (chain ladder), semiparametric models (overdispersed poisson) or fully parametric models. In this paper, we propose models based on negative binomial distributions for the prediction of outstanding number of claims, which are particularly useful to account for overdispersion. We first assume independence of random variables and introduce appropriate notation. Later, we generalise the model to account for dependence across development years. In both cases, the marginal distributions are negative binomials. We study the properties of the models and carry out bayesian inference. We illustrate the performance of the models with simulated and real datasets. 
\end{abstract}

\vspace{0.2in} \noindent {\sl Keywords}: claims reserving, integer-valued time series, latent variables, moving average process, stationary process.

\section{Introduction}
\label{sec:intro}

The prediction of outstanding claims is of interest for actuaries and insurance companies for claim reserving. The problem is usually referred to as incurred but not reported (IBNR) claims and can be stated as follows: Given a set of incremental (number or amount of) claims observed $\{X_{i,j},\,i=1,\ldots,n,\,j=1,\ldots,n-i+1\}$, usually represented by a run-off triangle like that in Table \ref{tab:triangle}, where the index $i$ represents the year of origin or accident year, and the index $j$ represents the development year or delay year. The idea is to predict unobserved claims $\{X_{i,j},\,i=2,\ldots,n,\,j=n-i+2,\ldots,n\}$, i.e., lower-right triangle in Table \ref{tab:triangle}, to determine the individual reserves (aggregated outstading claims) per year $N_i=\sum_{j=n-i+2}^n X_{i,j}$, for $i=2,\ldots,n$ and the total reserve $N=\sum_{i=2}^n N_i$. 

The first technique to estimate the reserve in this context is the chain-ladder. There is no official reference on who and when this technique was proposed, but it is definitely the most popular. The chain-ladder technique is not based on any specific model assumptions and therefore can be seen as a nonparametric model and it is indistinctively applied to the number of claims (discrete data) and to claim amounts (continuous data). See, for example, \cite{england&verral:02}. 

Many authors have proposed stochastic versions of the chain-ladder by assuming an underlying parametric model for incremental claims $X_{i,j}$. For example, for discrete data, \cite{renshaw&verrall:98} considered a Poisson model with a logarithmic link function for the mean of the form $\log\mu_{i,j}=\nu+\alpha_i+\beta_j$. \cite{verrall:00} also considers a Poisson model but takes a multiplicative expression to represent the mean, say $\mu_{i,j}=\alpha_i\beta_j$, where after a reparameterisation and considering a gamma conjugate prior for the row parameters $\alpha_i$, the marginal model becomes a negative binomial. 

For continuous data, \cite{kremer:82} considers a lognormal model, where the mean of the underlying normal distribution is of the form $\mu_{ij}=\nu+\alpha_i+\beta_j$. Bayesian treatment of this model was studied e.g. by \cite{dealba:02}.

All the previous models assume independence in the data. To relax this assumption, 
several dependence models have been proposed. For continuous data, dependence across development years has been studied by \cite{kremer:05}, who proposed an autoregressive model of order one; and \cite{dealba&nieto:08} and \cite{nieto&targino:21}, who proposed gamma models with Markov and moving average dependencies, respectively. \cite{ntzoufras&dellaportas:02} proposed lognormal models and induce dependence across origin years via dynamic models. 

For discrete data, there have been fewer proposals. \cite{kremer:95} use an integer-valued autoregressive process (INAR) to induce dependence across development years; and \cite{bastos&al:19} use a negative binomial model with mean parameterisation with a logarithmic link and a linear predictor that accounts for row, column and row-column parameters, plus a linear dynamic prior specification. Our objective is to propose a parametric model for integer data that is able to accommodate dependencies across development years. We achieve this by considering the Poisson dependence sequences of \cite{nieto:22} and extend it to have the desired parameterisation for runoff triangles and to have negative binomial marginal distributions, which are more flexible than the Poisson. 

The contents of the rest of the paper is as follows: In Section \ref{sec:model} we start by defining an independent negative binomial model with the appropriate parameterisation, we then extend it to include moving average dependencies of order $q$, and study its prior properties. In Section \ref{sec:inference} we carry out a bayesian inference of the model and characterise the posterior distributions. We implement numerical analyses in Section \ref{sec:numerical}, which includes simulated and real data and comparisons with alternative models. We conclude with some remarks in Section \ref{sec:conclusion}. 

Before we proceed we introduce some notation: $\po(\mu)$ denotes a Poisson distribution with mean (rate) $\mu$; $\nbi(r,p)$ denotes a negative binomial distribution with number of failures $r$ and probability of success $p$; $\ga(\alpha,\beta)$ denotes a gamma distribution with shape parameter $\alpha$ and rate parameter $\beta$ with mean $\alpha/\beta$; In general, we will add an argument upfront to denote the corresponding density, e.g. $\po(x\mid\mu)$ denotes a Poisson density evaluated at $x$.

\section{Models}
\label{sec:model}

\subsection{Independence model}

Let $\{X_{i,j}\}$ be a set of random variables associated with the number of claims made at origin year $i$ and development year $j$, for $i,j=1,\ldots,n$. We assume that 
\begin{equation}
\label{eq:indmodel}
X_{i,j}\sim\nbi(\alpha_i,1/(1+\pi_j)), 
\end{equation}
with $\alpha_i\in\NB$ and $\pi_j\in\RB^+$, independently for all $i$ and $j$, such that $\E(X_{i,j})=\alpha_i\pi_j$ and $\V(X_{i,j})=\alpha_i\pi_j(1+\pi_j)$. One of the key aspects of the negative binomial model, that differs from the Poisson model, is its overdispersion property which can be seen from the fact that $\V(X_{i,j})=\E(X_{i,j})(1+\pi_j)$ so $\V(X_{i,j})>\E(X_{i,j})$ since $\pi_j>0$. 

The chosen parameterisation in the negative binomial model is convenient, however, as in most stochastic reserving models \citep[e.g.][]{dealba&nieto:08}, estimability constraints have to be imposed in the column parameters $\pi_j$. Typically, $\sum_{j=1}^n\pi_j=1$, so that the row parameter $\alpha_i$ can be interpreted as the ultimate total number of claims and $\pi_j$ is the proportion of the total number of claims due in the development year $j$. 

\subsection{Dependence model}

Recently, \cite{nieto:22} introduced a dependence sequence of Poisson random variables with the property that the marginal distributions are all invariant Poisson with the same parameters. Dependence is induced via a set of latent Poisson variables in a moving average fashion. Since a negative binomial distribution can be seen as a mixture of Poisson distributions, we can obtain a dependence sequence of negative binomial random variables via mixtures.

The construction is as follows. Let $\bX=\{X_{i,j}\}$ be the set of variables of interest and $\bY=\{Y_{i,j}\}$ and $\bZ=\{Z_{i,j}\}$ two sets of latent variables; then the model is constructed hierarchically as
\begin{align}
\label{eq:depmodel}
\nonumber
Z_{i,j}&\simind\ga(\alpha_i,1/\pi_j) \\
Y_{i,j}\mid Z_{i,j}&\simind\po(Z_{i,j}\gamma_{i,j}) \\
\nonumber
X_{i,j}-\sum_{l=0}^qY_{i,j-l}\mid\bY,\bZ&\simind\po\left(Z_{i,j}-\sum_{l=0}^q Z_{i,j-l}\gamma_{i,j-l}\right),
\end{align}
for $i,j=1,\ldots,n$. 
Here $\btheta=\{\balpha,\bpi,\bgamma\}$ is the set of parameters with: $\balpha=\{\alpha_i\}$ where $\alpha_i\in\NB$ for $i=1,\ldots,n$; $\bpi=\{\pi_j\}\in\Pi$ where $\Pi$ is such that $\pi_j\in[0,1]$ for $j=1,\ldots,n$ and $\sum_{j=1}^n\pi_j=1$; and $\bgamma=\{\gamma_{i,j}\}\in\Gamma$ where $\Gamma$ is such that $\gamma_{i,j}\in\RB^+$ for $i,j=1,\ldots,n$ and, conditionally on the $Z_{i,j}$'s, $Z_{i,j}-\sum_{l=0}^q Z_{i,j-l}\gamma_{i,j-l}\geq 0$. 

In model \eqref{eq:depmodel}, parameters $\gamma_{ij}$'s define the strength of dependence and $q\geq 0$ is the order of dependence across development years. We define $Y_{i,j}\equiv 0$ and $Z_{i,j}\equiv 0$ with probability one (w.p.1) for $j\leq 0$. The joint distribution of all variables involved in \eqref{eq:depmodel} can be easily computed via $f(\bx,\by,\bz)=f(\bx\mid\by,\bz)f(\by\mid\bz)f(\bz)$ and from here the joint (marginal) distribution of the variables of interest, $f(\bx)$, can be obtained via marginalisation. 
Figure \ref{fig:graphm} illustrates a graphical representation of the dependence model, where the dependence is of order $q=2$.

To study the properties of model \eqref{eq:depmodel} we recall two results.  
\begin{equation}
\label{eq:res1}
\mbox{If}\quad Y\sim\po(z\gamma) \quad\mbox{and}\quad X-y\mid Y=y\sim\po(z(1-\gamma)) \quad\Rightarrow\quad X\sim\po(z),\quad\mbox{and} 
\end{equation}
\begin{equation}
\label{eq:res2}
\mbox{if}\quad Z\sim\ga(\alpha,1/\pi)\quad\mbox{and}\quad X\mid Z\sim\po(z)\quad\Rightarrow\quad X\sim\nbi(\alpha,1/(1+\pi)). 
\end{equation}

\begin{proposition}
\label{prop:mod}
Let $\{X_{i,j}\}$ for $i,j=1,\ldots,n$ be a finite sequence whose probability law is described by equations \eqref{eq:depmodel}. Then, 
\begin{enumerate}
\item[(i)] The marginal distribution for each $X_{i,j}$ is $\nbi(\alpha_i,1/(1+\pi_j))$ for all $i,j$.
\item[(ii)] The autocorrelation between $X_{i,j}$ and $X_{i,j+k}$, for $1\leq k\leq q$ is 
$$\Cr(X_{i,j},X_{i,j+k})=\frac{\sum_{l=0}^{q-k}\pi_{j-l}\gamma_{i,j-l}}{\sqrt{\pi_j(1+\pi_j)\pi_{j+k}(1+\pi_{j+k})}}$$
and zero for $k>q$. 
\end{enumerate}
\end{proposition}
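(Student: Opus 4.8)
The plan is to prove both parts by conditioning on the latent gamma layer $\bZ$, exploiting the fact that, given $\bZ$, everything reduces to sums of independent Poisson variables.

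For part (i), I would first show that $X_{i,j}\mid\bZ\sim\po(Z_{i,j})$. Writing $W_{i,j}=X_{i,j}-\sum_{l=0}^q Y_{i,j-l}$ for the innovation term, the model specifies that given $\bY,\bZ$ the variable $W_{i,j}$ is Poisson with rate $Z_{i,j}-\sum_{l=0}^q Z_{i,j-l}\gamma_{i,j-l}$, which depends on $\bZ$ only; hence, given $\bZ$, $W_{i,j}$ is independent of the $Y$'s. Thus, conditionally on $\bZ$, $X_{i,j}$ is a sum of the independent Poisson variables $Y_{i,j},\ldots,Y_{i,j-q}$ (with rates $Z_{i,j-l}\gamma_{i,j-l}$) and $W_{i,j}$, so by the superposition property of Poissons its rate is the telescoping sum $\sum_{l=0}^q Z_{i,j-l}\gamma_{i,j-l}+\big(Z_{i,j}-\sum_{l=0}^q Z_{i,j-l}\gamma_{i,j-l}\big)=Z_{i,j}$. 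This is exactly the multi-term analogue of result \eqref{eq:res1}. Since $Z_{i,j}\sim\ga(\alpha_i,1/\pi_j)$, result \eqref{eq:res2} then yields $X_{i,j}\sim\nbi(\alpha_i,1/(1+\pi_j))$ after marginalising over $Z_{i,j}$. In particular $\V(X_{i,j})=\alpha_i\pi_j(1+\pi_j)$, which supplies the denominator needed in part (ii).

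For part (ii), I would compute $\Cov(X_{i,j},X_{i,j+k})$ via the law of total covariance, conditioning on $\bZ$:
\begin{equation*}
\Cov(X_{i,j},X_{i,j+k})=\E\!\left[\Cov(X_{i,j},X_{i,j+k}\mid\bZ)\right]+\Cov\!\left(\E[X_{i,j}\mid\bZ],\E[X_{i,j+k}\mid\bZ]\right).
\end{equation*}
Because $\E[X_{i,j}\mid\bZ]=Z_{i,j}$ and the $Z_{i,j}$ are mutually independent, the second term vanishes for $k\geq1$. For the first term, conditionally on $\bZ$ the variables $\{Y_{i,j}\}$ and $\{W_{i,j}\}$ are all independent Poissons, so $\Cov(X_{i,j},X_{i,j+k}\mid\bZ)$ equals the sum of the conditional variances of the $Y$-terms shared by the two moving-average windows. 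The key bookkeeping step is to identify this overlap: $X_{i,j}$ uses $Y_{i,j-l}$ for $l=0,\ldots,q$ while $X_{i,j+k}$ uses $Y_{i,j+k-m}$ for $m=0,\ldots,q$, and these coincide precisely when $m=l+k$ with $l=0,\ldots,q-k$. Summing the corresponding Poisson variances $Z_{i,j-l}\gamma_{i,j-l}$ and taking expectations with $\E[Z_{i,j-l}]=\alpha_i\pi_{j-l}$ gives $\Cov(X_{i,j},X_{i,j+k})=\alpha_i\sum_{l=0}^{q-k}\pi_{j-l}\gamma_{i,j-l}$. Dividing by $\sqrt{\V(X_{i,j})\V(X_{i,j+k})}=\alpha_i\sqrt{\pi_j(1+\pi_j)\pi_{j+k}(1+\pi_{j+k})}$ cancels $\alpha_i$ and produces the stated formula; when $k>q$ the overlap is empty, so both terms vanish and the correlation is zero.

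The step I expect to be the main obstacle is the conditional-independence bookkeeping: one must argue cleanly that, given $\bZ$, the innovation $W_{i,j}$ is independent both of the shared $Y$'s and of $W_{i,j+k}$, and then correctly enumerate the overlapping window indices $l=0,\ldots,q-k$. Once that is settled, the remainder is just the superposition property of Poissons and the elementary moment identities for the gamma layer.
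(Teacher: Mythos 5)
Your proof is correct and takes essentially the same route as the paper's: for (i), conditioning on the gamma layer and using Poisson superposition together with result \eqref{eq:res2} to get $X_{i,j}\mid\bZ\sim\po(Z_{i,j})$ and hence the $\nbi(\alpha_i,1/(1+\pi_j))$ marginal, and for (ii), the law of total covariance with the window-overlap indices $l=0,\ldots,q-k$ yielding $\Cv(X_{i,j},X_{i,j+k})=\alpha_i\sum_{l=0}^{q-k}\pi_{j-l}\gamma_{i,j-l}$. The only cosmetic difference is that you apply the covariance decomposition once, conditioning directly on $\bZ$ (valid, as you note, because the innovation rates depend on $\bZ$ alone, so all $Y$'s and innovations are jointly independent given $\bZ$), whereas the paper conditions first on $(\bY,\bZ)$ and then a second time on $\bZ$; the bookkeeping and final expression are identical.
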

\begin{proof} \\
For $(i)$ we note that given $\bZ$ the $Y_{i,j}$'s are independent, so using the additivity property of independent Poisson random variables, 
\begin{equation}
\label{eq:posum}
\sum_{l=0}^q Y_{i,j-l}\,\Big{|}\,\bZ\sim\po\left(\sum_{l=0}^q Z_{i,j-l}\,\gamma_{i,j-l}\right).
\end{equation}
Now, considering the third level equation in \eqref{eq:depmodel} and using the result \eqref{eq:res1} we obtain $X_{i,j}\mid\bZ\sim\po(Z_{i,j})$. 
Finally, considering the first equation in \eqref{eq:depmodel} and result \eqref{eq:res2} we get $X_{i,j}\sim\nbi(\alpha_i,1/(1+\pi_j))$. \\ For $(ii)$ we rely on conditional independence properties and the iterative covariance formula. Conditioning on $\bY,\bZ$, then $$\Cv(X_{i,j},X_{i,j+k})=\E\{\Cv(X_{i,j},X_{i,j+k}\mid\bY,\bZ)\}+\Cv\{\E(X_{i,j}\mid\bY,\bZ),\E(X_{i,j+k}\mid\bY,\bZ)\}.$$ 
The first term becomes zero due to conditional independence. The second term is rewritten as $$\Cv\left(Z_{i,j}-\sum_{l=0}^q Z_{i,j-l}\gamma_{i,j-l}+\sum_{l=0}^q Y_{i,j-l}\,,\,Z_{i,j+k}-\sum_{l=0}^q Z_{i,j+k-l}\gamma_{i,j+k-l}+\sum_{l=0}^q Y_{i,j+k-l}\right).$$ Applying the iterative covariance formulae for a second time, conditioning on $\bZ$, we get $$\E\left\{\Cv\left(\sum_{l=0}^q Y_{i,j-l},\sum_{l=0}^q Y_{i,j+k-l}\,\Big{|}\,\bZ\right)\right\}+\Cv\left(Z_{i,j},Z_{i,j+k}\right),$$
where the first term is the result of removing the additive constants $Z_{i,j}$'s and the second term is the result of substituting $\E(Y_{i,j}\mid\bZ)=Z_{i,j}\gamma_{i,j}$. 
Since $Y_{i.j}$'s are conditionally independent given $\bZ$, the first term reduces to the expected value of the variance of the common elements, that is, $\E\left\{\V(\sum_{l=0}^{q-k}Y_{i,j-l}\mid\bZ)\right\}$ and the second term is zero for $k>0$. Now, using \eqref{eq:posum} to compute the conditional variance,
$$\Cv(X_{i,j},X_{i,j+k})=\E\left(\sum_{l=0}^{q-k}Z_{i,j-l}\gamma_{i,j-l}\right)=\alpha_i\sum_{l=0}^{q-k}\pi_{j-l}\gamma_{i,j-l}.$$
Finally computing the marginal variances we get $\V(X_{i,j})=\alpha_i\pi_j(1+\pi_j)$. Standardising the covariance we get the result. 
\end{proof}

\medskip
Proposition \ref{prop:mod} tells us two interesting things. The dependence model \eqref{eq:depmodel} reduces mar\-gi\-na\-lly to the negative binomial model \eqref{eq:indmodel}, but with dependence across development years. 
The autocorrelation expression between $\{X_{i,j}\}$ is a function of the column parameters $\bpi$ and $\bgamma$ parameters, therefore we refer to $\bgamma$ as the dependence parameters of the model. If $\gamma_{i,j}=0$ for all $i$ and $j$ then $Y_{i,j}=0$ with probability one (w.p.1), so regardless of the value of $q$, the $X_{i,j}$'s become independent. Moreover, for $q=0$ equations \eqref{eq:depmodel} reduce to $Z_{i,j}\sim\ga(\alpha_i,1/\pi_j)$,  $Y_{i,j}\mid\bZ\sim\po(Z_{i,j}\gamma_{i,j})$ and $X_{i,j}-Y_{i,j}\mid\bY,\bZ\sim\po(Z_{i,j}(1-\gamma_{i,j}))$ where, as in the general case, the marginal distributions are $X_{i,j}\sim\nbi(\alpha_i,1/(1+\pi_j))$ but with independence across the $X_{i,j}$'s, so the effect of $\gamma_{i,j}$ vanishes when $q=0$. In summary, the strength of the dependence is controlled by the lag $q$ and the dependence parameters $\bgamma$, larger / smaller $q$ and larger / smaller $\gamma_{i,j}$'s induce stronger / weaker dependence.

It is not easy to see why the autocorrelation, given in Proposition \ref{prop:mod}, is bounded by one since $\pi_j\in[0,1]$ and $\gamma_{i,j}\geq 0$. However, the rate parameter of the conditional distribution of  $X_{i,j}$ given $\bY$ and $\bZ$, third equation in \eqref{eq:depmodel}, is $Z_{i,j}-\sum_{l=0}^q Z_{i,j-l}\gamma_{i,j-l}$, which must be nonnegative. Therefore, if we take the expected value we get, $\alpha_i\left(\pi_j-\sum_{l=0}^q \pi_j\gamma_{i,j-l}\right)\geq 0$. Since $\alpha_i\in\NB$ then $\pi_j\geq\sum_{l=0}^q \pi_j\gamma_{i,j-l}$ and analogously $\pi_{j+k}\geq\sum_{l=0}^q \pi_{j+k}\gamma_{i,j+k-l}$. Now, observing that the numerator is smaller than any of the two previous sums, it is now clearer why $\Cor(X_{i,j},X_{i,j+k})\in[0,1]$. Having a positive dependence is useful in modeling trends across development years in the triangle. 

\section{Bayesian inference}
\label{sec:inference}

As mentioned in Section \ref{sec:intro}, the available data consists of a runoff triangle as in Table \ref{tab:triangle}. In notation, let $X_{i,j}$ be the set of observations for $j=1,\ldots,n-i+1$ and $i=1,\ldots,n$. For simplicity, we will make the dependence parameters independent of the row $i$, that is, $\gamma_{i,j}=\gamma_j$ to reduce the number of parameters in the model. This implies that the correlation $\rho_{j,j+k}=\Cor(X_{i,j},X_{i,j+k})$, given in Proposition \ref{prop:mod}, becomes independent of the origin year $i$. 

Here we describe a procedure, under a Bayesian approach, to make inferences about the unknown model parameters $\btheta = (\balpha, \bgamma, \bpi) $, where $\balpha=(\alpha_1,\ldots,\alpha_n)$, $\bgamma=(\gamma_1,\ldots,\gamma_n)$ and $\bpi=(\pi_1,\ldots,\pi_n)$. The posterior distribution will be characterised by the full conditional distributions of all elements in $\btheta$. 

To simplify the posterior derivation, we augment the likelihood by considering that the latent variables $\bZ$ and $\bY$ were also observed \citep[e.g.][]{tannerwong:87}. In the end, to obtain samples from the posterior distributions of model parameters, we will have to sample from the full conditional distributions of the latent variables $\bZ$ and $\bY$. 

The augmented likelihood function for $\btheta$ is given by 
\begin{align*}
f(\bx,\by,\bz \mid \btheta) &\propto \prod_{i=1}^n \prod_{j=1}^{n-i+1} \po\left(x_{ij}-\sum_{l=0}^q y_{i,j-l}\,\Bigg|\,z_{i,j}-\sum_{l=0}^q z_{i,j-l} \gamma_{j-l}\right) \\ 
&\hspace{2cm}\times \po(y_{i,j} \mid z_{i,j}\gamma_{j})\,\ga(z_{i,j} \mid \alpha_i,1/\pi_j).
\end{align*}

The prior distributions for each of the sets of parameters are assumed independent and are given by
$\alpha_i\sim\geo(p_{\alpha})$, for $i=1,\ldots,n$; $\gamma_j\sim\ga(a_{\gamma},b_{\gamma})$ for $j=1,\ldots,n$; and $\bpi\sim\dir(\ba)$, with $\ba=(a_1,\ldots,a_n)$ and $a_j>0$ for $j=1,\ldots,n$.

The full conditional distributions for the model parameters and the latent variables are given below. 
\begin{enumerate}

\item[(I)] Posterior conditional distribution for $\alpha_i$, $i=1,\ldots,n$
$$f(\alpha_i\mid\rest)\propto \frac{\left\{(1-p_{\alpha})\prod_{j=1}^{n-i+1}\left(z_{i,j}/\pi_j\right)\right\}^{\alpha_i}}{\{\Gamma(\alpha_i)\}^{n-i+1}}I_{\{0,1,\ldots\}}(\alpha_i)$$

\item[(II)] Posterior conditional distribution for $\gamma_j$, $j=1,\ldots,n$
$$\hspace{-2cm}f(\gamma_j\mid\rest)\propto\left\{\prod_{k=j}^{\min(j+q,n)}\prod_{i=1}^{n-k+1}\left(z_{i,k}-\sum_{l=0}^q z_{i,k-l}\gamma_{k-l}\right)^{x_{i,k}-\sum_{l=0}^q y_{i,k-l}}\right\}$$
$$\hspace{3cm}\times\gamma_j^{a_{\gamma}+\sum_{i=1}^{n-j+1}y_{i,j}-1}e^{-\gamma_j\left(b_{\gamma}+\sum_{i=1}^{n-j+1}z_{i,j}-\sum_{k=j}^{\min(j+q,n)}\sum_{i=1}^{n-k+1}z_{i,j}\right)},$$
for $0\leq\gamma_j\leq\min_{k=j,\ldots,\min(j+q,n);i=1,\ldots,n-k+1}\left\{\frac{z_{i,k}-\sum_{l=0,l\neq k-j}^q z_{i,k-l}\gamma_{k-l}}{z_{i,j}}\right\}$ if $q\geq 1$, and $0\leq\gamma_j\leq 1$ if $q=0$.

\item[(III)] Posterior conditional distribution for $\pi_j$, $j=1,\ldots,n-1$
$$f(\pi_j\mid\rest)\propto \pi_j^{a_j-1-\sum_{i=1}^{n-j+1}\alpha_i}\,e^{-(1/\pi_j)\sum_{i=1}^{n-j+1}z_{i,j}}\,\pi_n^{a_n-1-\alpha_1}\,e^{-z_{1,j}/\pi_n},$$
where $\pi_n=1-\sum_{k=1}^{n-1}\pi_k$, for $0\leq \pi_j\leq 1-\sum_{k=1,k\neq j}^{n-1}\pi_k$. 

\item[(IV)] Posterior distribution for $Y_{i,j}$, for $i=1,\ldots,n$, $j=1,\ldots,n-i+1$
$$\hspace{-7mm}f(y_{i,j}\mid\rest)\propto\left\{\frac{z_{i,j}\gamma_j}{\prod_{k=j}^{\min(j+q,n-i+1)}\left(z_{i,k}-\sum_{l=0}^q z_{i,k-l}\gamma_{k-l}\right)}\right\}^{y_{i,j}}$$ $$\hspace{1cm}\times\frac{1}{y_{i,j}!\prod_{k=j}^{\min(j+q,n-i+1)}\left(x_{i,k}-\sum_{l=0}^q y_{i,k-l}\right)!},$$
for $y_{i,j}\in\{0,\ldots,\min_{k=j\ldots,\min(j+q,n-i+1)}(x_{i,k}-\sum_{l=0,l\neq k-j}^q y_{i,k-l})\}$ if $q\geq 1$, and $y_{i,j}\in\{0,\ldots,x_{i,j}\}$ if $q=0$.

\item[(V)] Posterior conditional distribution for $Z_{i,j}$, for $i=1,\ldots,n$, $j=1,\ldots,n-i+1$
$$f(z_{i,j}\mid\rest)\propto \left\{\prod_{k=j}^{\min(j+q,n)}\left(z_{i,k}-\sum_{l=0}^q z_{i,k-l}\gamma_{k-l}\right)^{x_{i,k}-\sum_{l=0}^q y_{i,k-l}}\right\}$$
$$\hspace{1cm}\times\,z_{i,j}^{\alpha_i+y_{i,j}-1}e^{-z_{i,j}\{1/\pi_j-(\min(j+q,n-i+1)-j)\gamma_j+1\}},$$
for $z_{i,j}=0,1,\ldots$.
%for $z_{i,j}\in \left\{\frac{1}{1-\gamma_j}\sum_{l=1}^q z_{i,j-l}\gamma_{j-l}\,,\ldots,\,\frac{1}{\gamma_j}\min_{k=j+1,\ldots,j+q}\left(z_{i,k}-\sum_{l=0,l\neq k-j}^q z_{i,k-l}\gamma_{k-l}\right)\right\}$ if $q>0$ and $j>1$, and $z_{i,j}\in\{0,1,\ldots\}$ if $q=0$ or $j=1$.
\end{enumerate}

With conditional distributions (I)--(V) we implement a Gibbs sam\-pler \citep{smith&roberts:93}. None of these distributions is of standard form; therefore, we require a Metropolis-Hastings step \citep{tierney:94} to sample from each of them. We define random walks with uniform proposal distributions. Specifically, for each parameter/latent variable $\theta\in\{\alpha_i,\gamma_j,\pi_j,y_{i,j},z_{i,j}\}$, at iteration $(t+1)$ we propose $\theta^*\mid\theta^{(t)}\sim\un(\theta^{(t)}-\delta_{\theta},\theta^{(t)}+\delta_{\theta})$ and take $\theta^{(t+1)}=\theta^*$ with probability $f(\theta^*\mid\rest)/f(\theta^{(t)}\mid\rest)$ and $\theta^{(t+1)}=\theta^{(t)}$ otherwise. Parameters $\delta_{\theta}$ are tuning parameters that determine the probability of acceptance for each node $\theta$. These were calibrated so that the acceptance probability is around $30\%$. 

We assess model fit by computing three statistics. the first one is the logarithm of the pseudo marginal likelihood (LPML), which is a measure of the predictive performance of the model. This is defined as a function of the Conditional Predictive Ordinate (CPO) $$LPML=\sum_{i=1}^n\sum_{j=1}^{n-i+1}\log(CPO_i),$$ with $CPO_i=f(x_{i,j}\mid \bx_{-(i,j)})$, and can be easily approximated via Monte Carlo, see \cite{geisser&eddy:79}. The second measure is the average squared bias defined as $$BIAS=\frac{2}{n(n+1)}\sum_{i=1}^n\sum_{j=1}^{n-i+1}\left\{\E(X_{i,j}\mid\data)-x_{i,j}\right\}^2,$$ and the third measure is the average predictive variance defined as $$PVAR=\frac{2}{n(n+1)}\sum_{i=1}^n\sum_{j=1}^{n-i+1}\V(X_{i,j}\mid\data).$$
Larger values of LPML and smaller values of BIAS and PVAR indicate a better fit. 

\section{Numerical analyses}
\label{sec:numerical}

\subsection{Simulation study}

We first test the posterior sampling algorithm in a control setting. We sample data from our dependence negative binomial model \eqref{eq:depmodel} with the following specifications: We take $n=10$ years to define the triangle; $\alpha_i=1000$ for $i=1,\ldots,n$ as the total number of claims; $\pi_j=2(n-j+1)/(n(n+1))$
for $j=1,\ldots,n$ as the development years proportions, which show a decreasing pattern, as in real life situations; $\gamma_j=0.15$ for $j=1,\ldots,n$ as the strength of dependence, with $q=2$ as order (lag) of dependence. 

Prior specifications for the model are: $p_{\alpha}=0.01$, $a_{\gamma}=1$, $b_{\gamma}=2$ and $a_j=1/2$ for $j=1,\ldots,n$. We ran the MCMC sampler for $50,000$ iterations, a burn-in of $5,000$ and a thinning of $20$. The code was implemented in Fortran in an Intel Xenon at 3.00 GHz with 24GB of RAM and a Linux operating system. Each run took 1.5 minutes. To determine the order of dependence, we took a set of different values $q=0,1,\ldots,4$ and compared each fit using the three statistics: LPML, BIAS and PVAR. Results are shown in Table \ref{tab:sim}. 
The three fit statistics select the true order of dependence, $q=2$, as the best fitting. 

Posterior estimates of the model parameters are included in Figures \ref{fig:alpha} and \ref{fig:gamma}, left panel. In all cases, 95\% posterior credible intervals (CI) contain the true value, with the intervals slightly larger for $\alpha_{10}$ and $\gamma_{10}$, because the posterior inference relies on only one observation (see Figures \ref{fig:alpha} and \ref{fig:gamma}, left panels). In Figure \ref{fig:alpha} (right panel) we observe the decreasing pattern of the development year proportions $\pi_j$, and our model is able to capture such behaviour. 

We test the prediction power of our model by producing 95\% posterior predictive CI. These are included in Figure \ref{fig:xf}. Observed data $x_{i,j}$ for $j=1,\ldots,n-i+1$ and $i=1,\ldots,n$ are denoted as dots, whereas CI are denoted as vertical lines. All observations are captured by our interval predictions. For the non-observed data, lower-right part of the triangle, $X_{i,j}$ for $j=n-i+2,\ldots,n$ and $i=2,\ldots,n$, we also produce 95\% CI and depict them as dotted lines. Our predictions follow the decreasing trend observed in the data, across development years. 

Finally, for each incomplete origin year, $i=2,\ldots,10$, we add the predicted unobserved number of claims $N_i=\sum_{j=n-i+2}^n X_{i,j}$ and report them as boxplots in Figure \ref{fig:gamma} (right panel). As expected, the unobserved number of claims is increasing as we increase the origin year. For the last two years $i=9,10$, the predicted number of claims is around the same quantity, with more dispersion shown for year $i=10$. Overall, adding the total number of future claims $N=\sum_{i=2}^{10}N_i$, the model predicts an average of $2,858$ claims with a $95\%$ CI between $2,605$ and $3,129$ claims. 

\subsection{General insurance data}

One of the most analysed datasets of number of claims is that of a portfolio of general insurance policies, which is reported, for example, in \cite{dealba:02} and also included in Table \ref{tab:dealba}. The information is available for $n=10$ years. 

We fit our negative binomial dependence model with the same prior specifications as in the simulation study. MCMC sampler was run for 50,000 iterations with a burn in of 5,000 and keeping one of every 20$^{th}$ iteration to compute posterior summaries. The running time was 1.5 minutes. 

To select the order of dependence, we fitted the model with varying $q\in\{0,1,2,3\}$ and compared the fitting statistics. These are reported in Table \ref{tab:dealbagof}. According to LPML and PVAR, the best model is obtained when $q=1$, however, the BIAS selects the model that assumes independence, which is obtained when $q=0$. We therefore compare posterior summaries for both models. 

Figures \ref{fig:alphad} and \ref{fig:gammad} present posterior estimates obtained with $q=1$. Point and 95\% CI, for parameters $\alpha_i$, $i=1,\ldots,n$ are included in the left panel of Figure \ref{fig:alphad}. These show an interesting pattern in the ultimate total number of claims, it starts around 600 in the first year, then increases up to 700 in year two and from there it steadily decreases until around 300 claims in year ten. 

The right panel in Figure \ref{fig:alphad} presents posterior estimates for development year proportions $\pi_j$ for $j=1,\ldots,n$. The typical pattern of development year proportions is decreasing in time; however, for these data, the pattern is of mountain shape with an increasing tendency for years one to three and a decreasing tendency for years from fourth to tenth. We also note that the uncertainty for years from six to ten is very low. 

In Figure \ref{fig:gammad}, left panel, we present point and 95\% CI estimates for the dependence parameters $\gamma_j$, $j=1,\ldots,n$. Recall that these parameters determine the strength of the dependence across development years. For the first year, there is a lot of uncertainty in $\gamma_1$ with values close to 0.4, however, for years two to five the uncertainty is highly reduced with values around 0.1. The uncertainty starts to increase for years beyond five, perhaps due to the reduced number of observations and with values around 0.2. 

In the right panel of Figure \ref{fig:gammad}, we include posterior estimates of the correlations given in Proposition \ref{prop:mod}. Since the order of dependence is $q=1$, the correlations are only positive between adjacent years $j$ and $j+1$, so we denote them as $\rho_{j,j+1}$. Although the pattern is similar to the dependence parameters $\gamma_j$, the correlations have been standardised with the development years proportions and are bounded to a $[0,1]$ scale. 

We finally compute the predictive distribution of the aggregated number of unobserved claims $N_i$ for each origin year $i=2,\ldots,n$. To place our predictions in context, we also computed the chain-ladder predictions and included them as bold numbers in Table \ref{tab:dealba} (lower-down triangle). The predictive distributions with both models, independent ($q=0$) and dependent ($q=1$) are presented in Figure \ref{fig:resd} as grey boxplots. Chain-ladder estimates are indicated as red asteriks. For the independence model, left panel, chain-ladder estimates lie inside each of the boxes that represents 50\% probability, however, for the dependence model, right panel, the chain-ladder estimate for year ten lies slightly outside of the box, but certainly within the 95\% probability interval. 

The total aggregated number of claims $N$, for years $j=2,\ldots,n$, is presented in Figure \ref{fig:restotd}. For the independence model (left panel), the chain-ladder estimate (vertical line) lies in the center of the predictive distribution, whereas for the dependence model (right panel), the chain-ladder estimate lies towards the right tail of the distribution, which indicates that the chain-ladder estimates of $902$ is very conservative, if considering the posterior predictive mean with the dependence model, which is $818$ claims. 

\subsection{Automobile data}

We now consider a data set on automobile bodily injury liability \citep{berquist&sherman:77}. This consists of claim counts for the period from 1969 to 1976, i.e., for a total of $n=8$ years. The data is available in Table \ref{tab:auto}. We note that the numbers for the first two development years are a lot bigger than the number for the development years from three to eight, this means that the great majority of claims occur in the first two development years. 

We fit our negative binomial dependence model with the same prior specifications as in the simulation study, but this time we run longer chains. MCMC was run for 100,000 iterations with a burn in of 10,000 and a thinning of 40. The running time was 1.3 minutes. 

Since the runoff triangle is of dimension $n=8$, we selected the order of dependence by considering the values $q\in\{0,1,2\}$. The corresponding fit statistics are reported in Table \ref{tab:autogof}. The three statistics LPML, BIAS and PVAR all select the model with $q=1$ to be the best. We therefore report inferences with this model. 

The posterior estimates for the ultimate number of claims, $\alpha_i$ for $i=1,\ldots,n$, are reported in Figure \ref{fig:alphaa} (left panel). The trend across the different origin years is not smooth. It starts around 7,800 claims in the first year and linearly increases up to 9,500 in year three and remains there for the next two years. It comes down to 7,800 in year six, goes a little up to 8,000 in year seven, and finally comes down to 7,200 in year eight. 

Estimates of the proportion of claims that occurred in each development year, $\pi_j$ for $j=1,\ldots,n$, are included in the right panel of Figure \ref{fig:alphaa}. As already seen in the data, the first year represents a little more than 80\% of the claims, and year two a little less than 20\% of the claims. The proportion of claims due in years three to eight is close to zero. 

The posterior estimates for the dependence parameters, $\gamma_j$ for $j=1,\ldots,n$, are reported in the left panel in Figure \ref{fig:gammaa}. We note an increase in uncertainty as the development years evolve. This is reflected in wider credible intervals. To better appreciate the dependence between years, we computed the correlations $\rho_{j,j+1}$ for $i=1,\ldots,7$. Considering the point estimates, we note that the correlation fluctuates around 0.2, being a little lower for the first two years. The size of the intervals in the first two correlations with respect to the others is perhaps due to the magnitude of the numbers. 

The aggregated number of claims, $N_i$ for $i=2,\ldots,n$, are also estimated. These are included in the left panel of Figure \ref{fig:resa}. The posterior predictive distributions for $N_i$ are shown as gray box plots, and the chain-ladder estimates are denoted by red asterisks. Apart from the last two years, the chain-ladder estimates lie inside the 50\% middle box. For year seven, our 95\% CI is $N_7\in[113,165]$ and the chain-ladder estimate is 160, which is still inside. However, for year eight, $N_8\in[1079,1242]$ and the chain-ladder estimate is 1343, which is clearly outside of our prediction interval. 

The posterior predictive distribution for the total aggregated number of claims, $N$, is included as a histogram in the right panel of Figure \ref{fig:resa}. The chain-ladder point estimate is in the limit of the right tail of our predictive distribution. In fact, the posterior predictive mean is 1397 with a 95\% CI of $[1309,1484]$, while the chain-ladder value is 1597, which is clearly not supported by our model and overestimates the number of claims. 

\section{Concluding remarks}
\label{sec:conclusion}

We have introduced a negative binomial model with an appealing parameterisation in terms of row and column parameters. We have also extended the model to include dependence or order $q\geq 0$ as in a moving average fashion, but maintaning the marginal distribution as in the independence case. 

Posterior inference of our model requires the implementation of an MCMC algorithm with five sets of conditional distributions, three sets of parameters plus two sets of latent variables. The algorithm was implemented in Fortran, which makes it very efficient. The executable files can run through R without the need for additional compilation, and the code is available as supplementary material. 

For the data sets analyses here, we have shown that there is dependence across development years, and ignoring it results in overestimating the reserve, which is a waste of resources for insurance companies. 

Possible extensions of our model are the inclusion of dependence across origin years and/or calendar years (diagonals in the runoff triangle). These and other possible extensions are left for future work. 

\section*{Acknowledgements}

The first author acknowledges support from \textit{Asociaci\'on Mexicana de Cultura, A.C.}. The second author acknowledges support from CNPq (200293/2022-2) and FAPERJ (E-26/201.350, E-26/211.426, E-26/211.578).

\bibliographystyle{natbib}

\newpage

%---------Tables----------

\begin{table}
$$\begin{array}{c|ccccccc} 
\hline \hline
\text{Year of} & \multicolumn{7}{c}{\text{Development year}} \\ 
\text{origin} & 1 & 2 & \cdots & j & \cdots & n-1 & n \\ \hline
1 & X_{1,1} & X_{1,2} & \cdots & X_{1,j} &  & X_{1,n-1} & X_{1,n} \\ 
2 & X_{2,1} & X_{2,2} & \cdots & X_{2,j} &  & X_{2,n-1} &  \\ 
\vdots & \vdots & \vdots & \cdots & \vdots &  &  &  \\ 
i & X_{i,1} & X_{i,2} & \cdots & X_{i,n+1-i} &  &  &  \\ 
\vdots & \vdots & \vdots &  &  &  &  &  \\ 
n-1 & X_{n-1,1} & X_{n-1,2} &  &  &  &  &  \\ 
n & X_{n,1} &  &  &  &  &  &  \\ \hline \hline
\end{array}$$
\caption{Run-off triangle of available data.}
\label{tab:triangle}
\end{table}

\begin{table}
\centering
$$\begin{array}{cccc} \hline\hline
q & \text{LPML} & \text{BIAS} & \text{PVAR} \\ \hline
0 & -221.43 & 28.67 & 130.97 \\
1 & -219.40 & 26.38 & 106.40 \\
2 & \bf{-218.11} & \bf{25.94} & \bf{102.94} \\
3 & -220.09 & 31.57 & 104.47 \\
4 & -221.44 & 35.66 & 121.09 \\
\hline\hline
\end{array}$$
\caption{Fit statistics in simulation study. Best fitting in bold.}
\label{tab:sim}
\end{table}

\begin{table}
$$\begin{array}{c|cccccccccc} 
\hline \hline
i/j & 1 & 2 & 3 & 4 & 5 & 6 & 7 & 8 & 9 & 10 \\ \hline
1 & 40 & 124 & 157 & 93 & 141 & 22 & 14 & 10 & 3 & 2 \\ 
2 & 37 & 186 & 130 & 239 & 61 & 26 & 23 & 6 & 6 & \bf{2} \\ 
3 & 35 & 158 & 243 & 153 & 48 & 26 & 14 & 5 & \bf{5} & \bf{2} \\ 
4 & 41 & 155 & 218 & 100 & 67 & 17 & 6 & \bf{6} & \bf{4} & \bf{2} \\ 
5 & 30 & 187 & 166 & 120 & 55 & 13 & \bf{13} & \bf{6} & \bf{4} & \bf{2} \\ 
6 & 33 & 121 & 204 & 87 & 37 & \bf{17} & \bf{11} & \bf{5} & \bf{4} & \bf{2} \\ 
7 & 32 & 115 & 146 & 103 & \bf{53} & \bf{16} & \bf{11} & \bf{5} & \bf{3} & \bf{2} \\ 
8 & 43 & 111 & 83 & \bf{83} & \bf{43} & \bf{13} & \bf{9} & \bf{4} & \bf{3} & \bf{1} \\ 
9 & 17 & 92 & \bf{101} & \bf{74} & \bf{38} & \bf{11} & \bf{8} & \bf{4} & \bf{2} & \bf{1} \\ 
10 & 22 & \bf{89} & \bf{103} & \bf{75} & \bf{39} & \bf{11} & \bf{8} & \bf{4} & \bf{2} & \bf{1} \\ 
\hline \hline
\end{array}$$
\caption{General insurance data. Observed data (upper-left triangle) and chain-ladder forecasts (bottom-right triangle).}
\label{tab:dealba}
\end{table}

\begin{table}
\centering
$$\begin{array}{cccc} \hline\hline
q & \text{LPML} & \text{BIAS} & \text{PVAR} \\ \hline
0 & -353 & \bf{200} & 108 \\
1 & \bf{-334} & 373 & \bf{102} \\
2 & -350 & 383 & 105 \\
3 & -354 & 398 & 115 \\
\hline\hline
\end{array}$$
\caption{Fit statistics in general insurance data. Best fitting in bold.}
\label{tab:dealbagof}
\end{table}

\begin{table}
$$\begin{array}{c|cccccccc} 
\hline \hline
i/j & 1 & 2 & 3 & 4 & 5 & 6 & 7 & 8 \\ \hline
1 & 6553 & 1143 & 74 & 29 & 15 & 5 & 1 & 1 \\
2 & 7277 & 1260 & 78 & 46 & 14 & 4 & 3 \\
3 & 8259 & 1506 & 119 & 42 & 14 & 5 \\
4 & 7858 & 1616 & 141 & 49 & 16	\\
5 & 7808 & 1568 & 137 & 49 & \\
6 & 6278 & 1336 & 127 & \\
7 & 6446 & 1438 \\
8 & 6115 \\
\hline \hline
\end{array}$$
\caption{Automobile observed data.}
\label{tab:auto}
\end{table}

\begin{table}
\centering
$$\begin{array}{cccc} \hline\hline
q & \text{LPML} & \text{BIAS} & \text{PVAR} \\ \hline
0 & -262 & 5240 & 3075 \\
1 & \bf{-225} & \bf{4526} & \bf{3022} \\
2 & -232 & 5061 & 3241 \\
\hline\hline
\end{array}$$
\caption{Fit statistics automobile data. Best fitting in bold.}
\label{tab:autogof}
\end{table}

%---------Figures----------

\begin{figure}[h]
\setlength{\unitlength}{0.8cm}
\hspace{-5mm}
\begin{picture}(20,4.0)
\put(4.9,3.3){\oval(2.4,1.1)}
\put(4.0,3.1){$Z_1\rightarrow Y_1$}
\put(7.9,3.3){\oval(2.4,1.1)}
\put(7.0,3.1){$Z_2\rightarrow Y_2$}
\put(10.9,3.3){\oval(2.4,1.1)}
\put(10.0,3.1){$Z_3\rightarrow Y_3$}
\put(13.9,3.3){\oval(2.4,1.1)}
\put(13.0,3.1){$Z_4\rightarrow Y_4$}
\put(16.9,3.3){\oval(2.4,1.1)}
\put(16.0,3.1){$Z_5\rightarrow Y_5$}
\put(4.9,0.3){\oval(1.1,1.1)}
\put(4.6,0.1){$X_1$}
\put(7.9,0.3){\oval(1.1,1.1)}
\put(7.6,0.1){$X_2$}
\put(10.9,0.3){\oval(1.1,1.1)}
\put(10.6,0.1){$X_3$}
\put(13.9,0.3){\oval(1.1,1.1)}
\put(13.6,0.1){$X_4$}
\put(16.9,0.3){\oval(1.1,1.1)}
\put(16.6,0.1){$X_5$}
\put(4.8,2.7){\vector(0,-1){1.8}}
\put(7.8,2.7){\vector(0,-1){1.8}}
\put(10.8,2.7){\vector(0,-1){1.8}}
\put(13.8,2.7){\vector(0,-1){1.8}}
\put(16.8,2.7){\vector(0,-1){1.8}}
\put(4.8,2.7){\vector(4,-3){2.6}}
\put(7.8,2.7){\vector(4,-3){2.6}}
\put(10.8,2.7){\vector(4,-3){2.6}}
\put(13.8,2.7){\vector(4,-3){2.6}}
\put(4.8,2.7){\vector(3,-1){5.7}}
\put(7.8,2.7){\vector(3,-1){5.7}}
\put(10.8,2.7){\vector(3,-1){5.7}}
\end{picture}
\vspace{5mm}
\caption{Graphical representation of dependence model \eqref{eq:depmodel} for $q=2$.}
\label{fig:graphm} 
\end{figure}
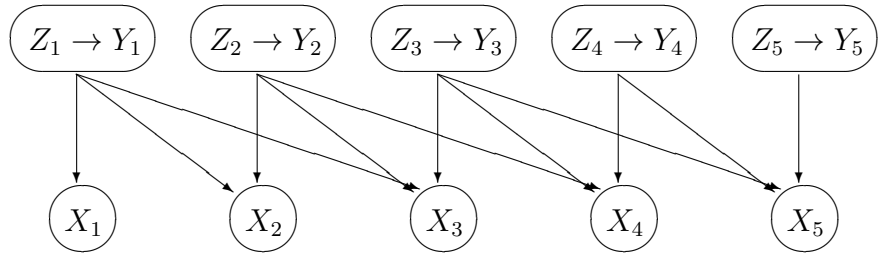

\begin{figure}
\centering
\includegraphics[scale=0.45]{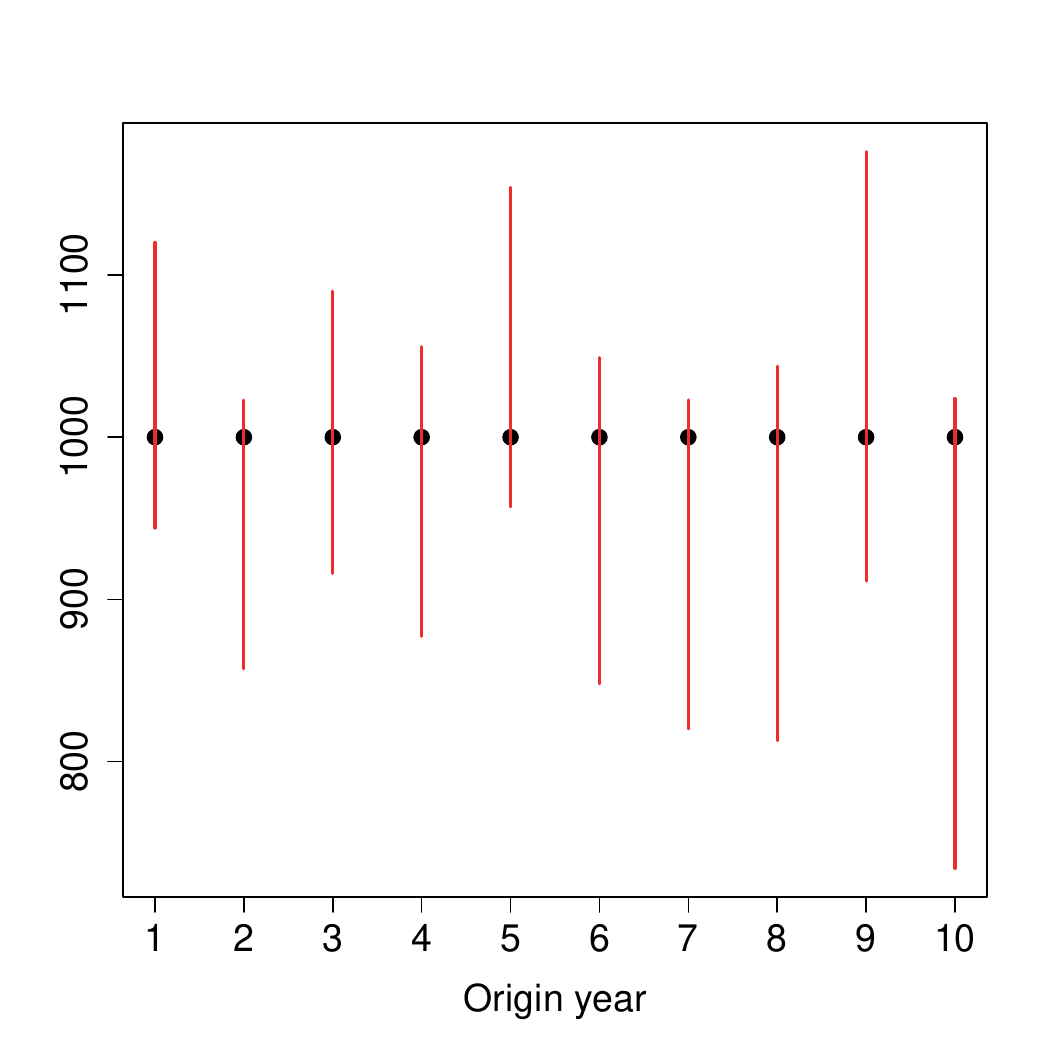}
\includegraphics[scale=0.45]{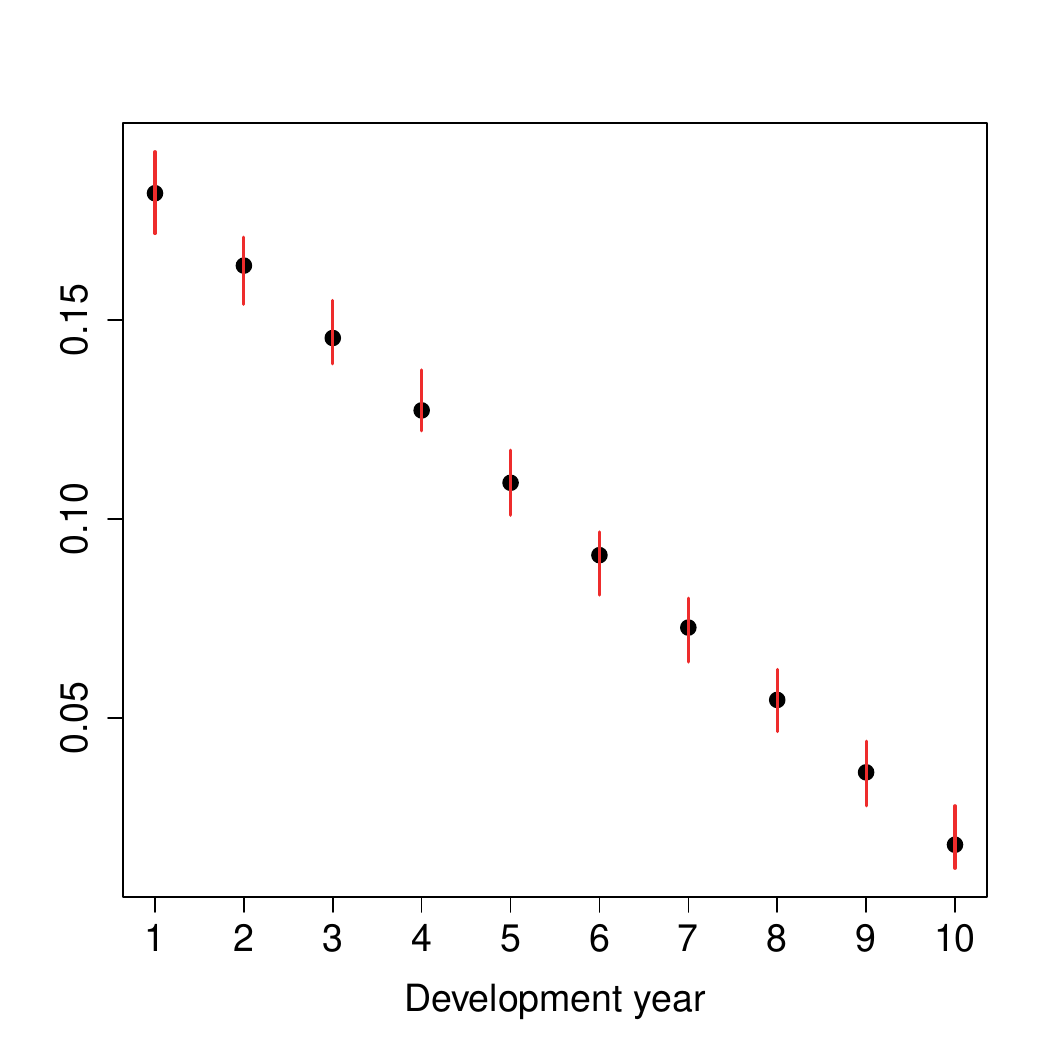}
\caption{Simulated data. Posterior estimates of parameters: $\alpha_i$, $i=1,\ldots,n$ (left) and $\pi_j$, $j=1,\ldots,n$ (right) with $n=10$. True value (dots) and 95\% CI (lines).}
\label{fig:alpha}
\end{figure}

\begin{figure}
\centering
\includegraphics[scale=0.45]{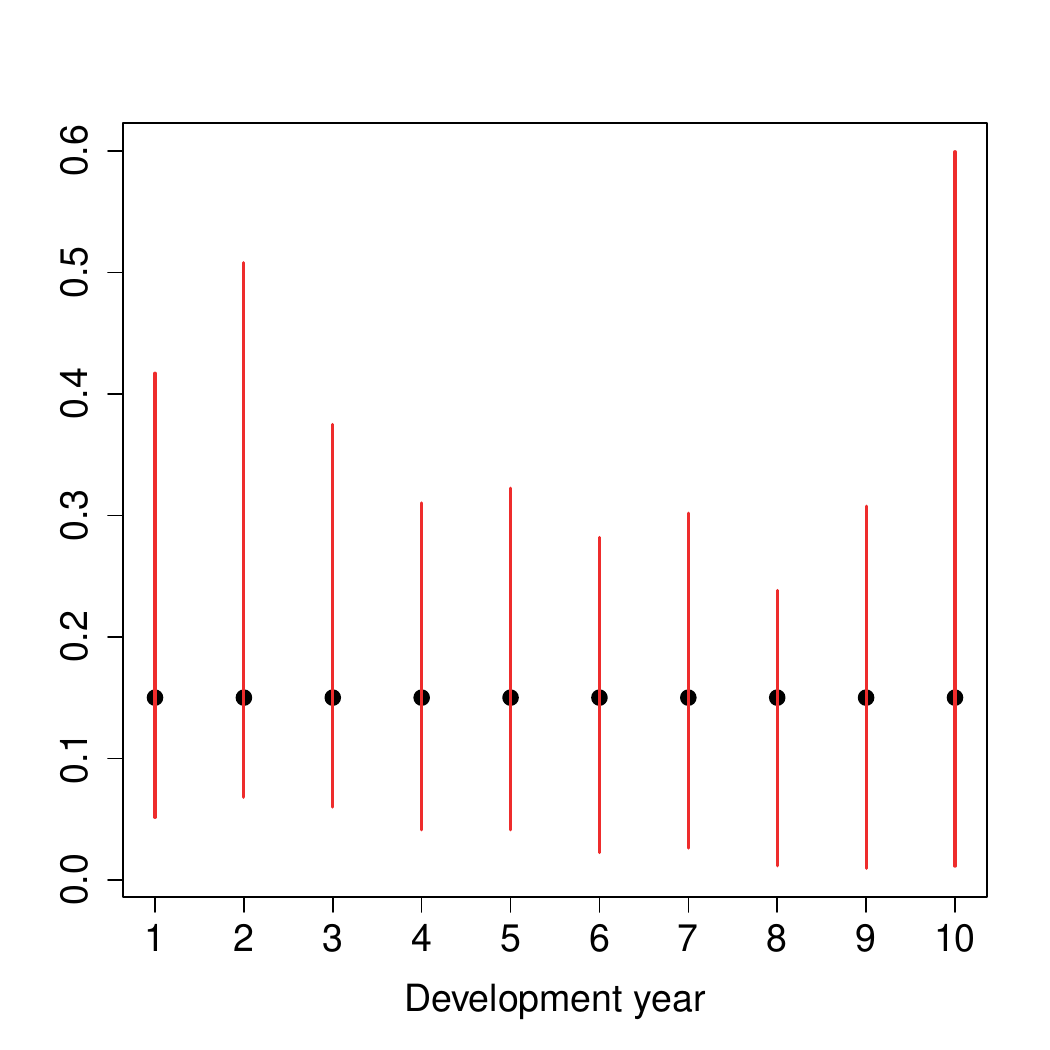}
\includegraphics[scale=0.45]{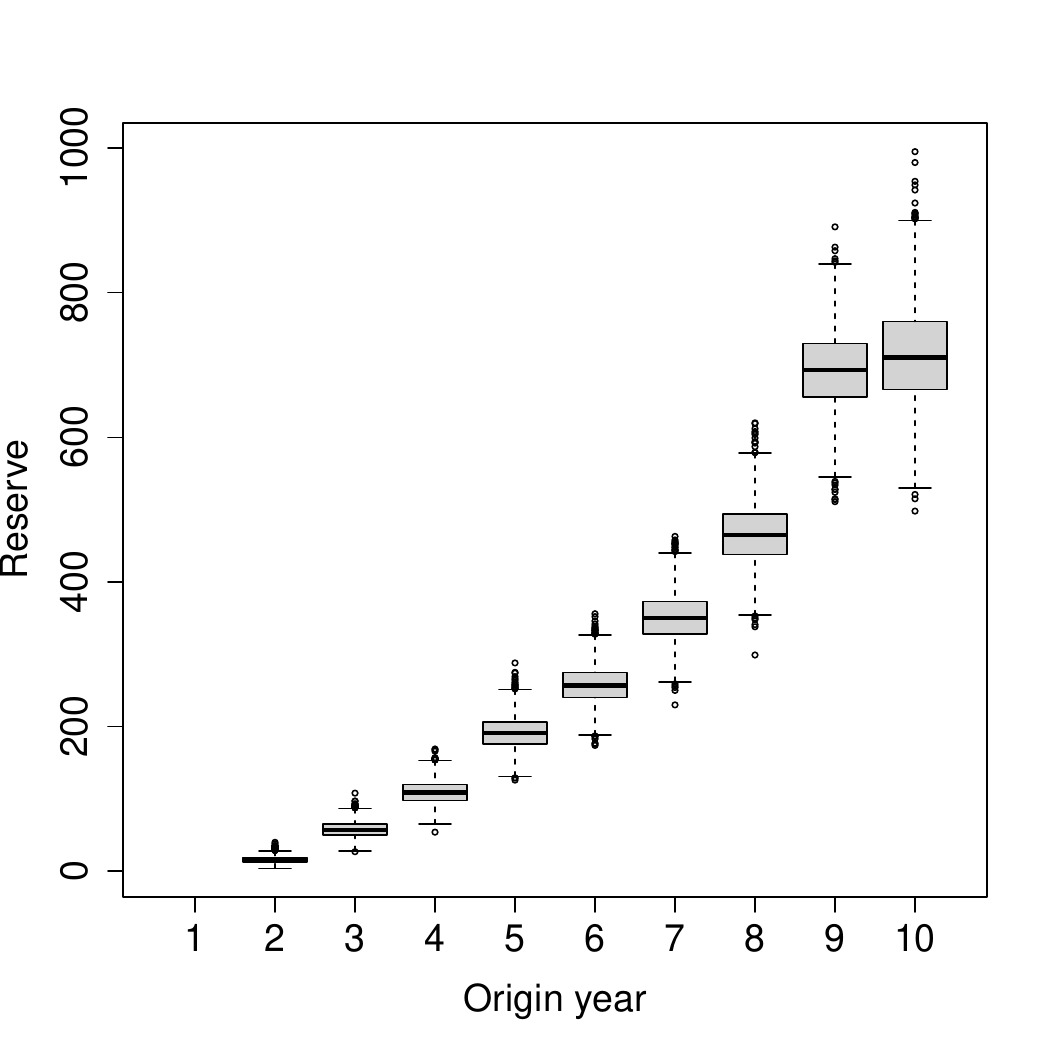}
\caption{Simulated data. Left: posterior estimates for parameters $\gamma_j$, $j=1,\ldots,n$ with $n=10$. True value (dots) and 95\% CI (lines). Right: boxplots of posterior predicted aggregated number of claims $N_i$, for $i=2,\ldots,n$.}
\label{fig:gamma}
\end{figure}

\begin{figure}
\centering
\includegraphics[scale=0.8]{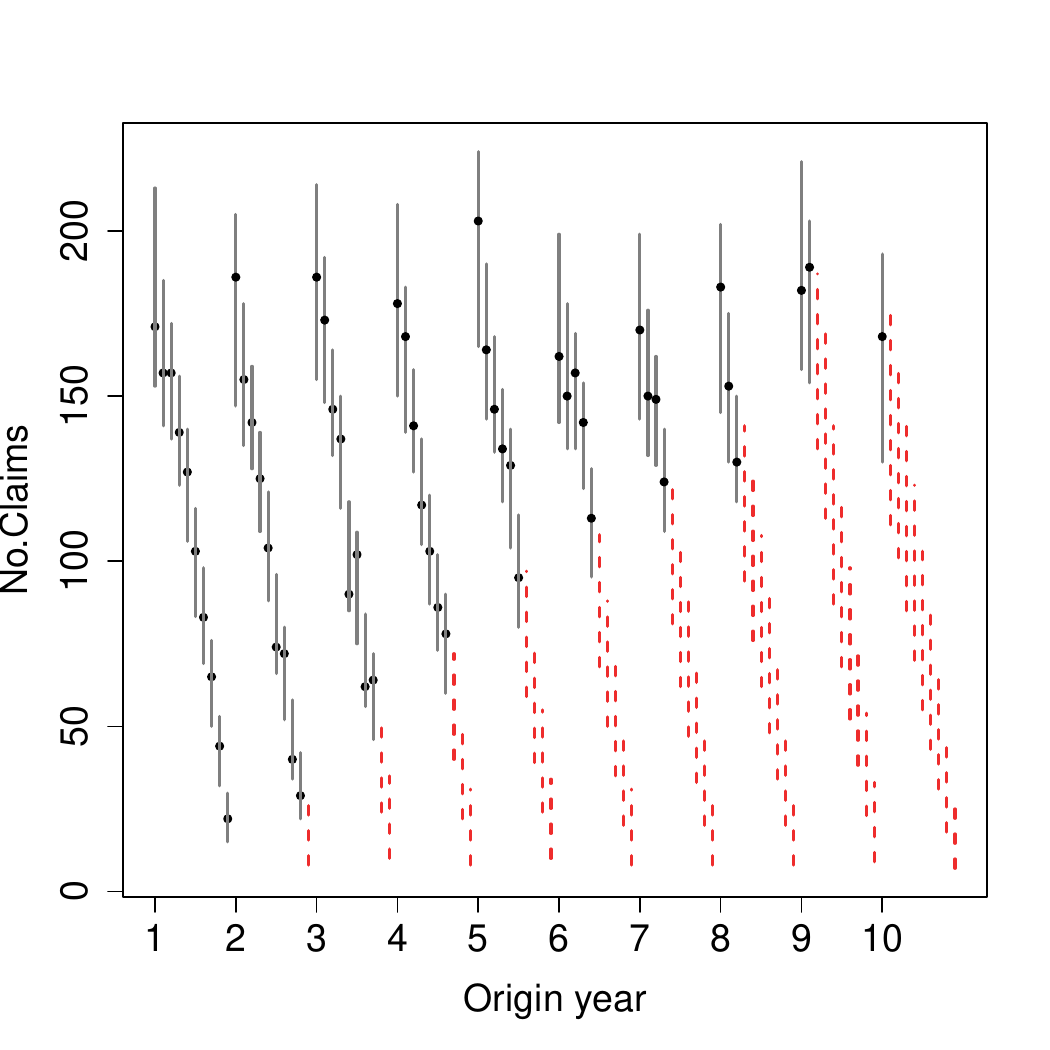}
\caption{Simulated data. Posterior predictions for $X_{i,j}$, $i,j=1,\ldots,n$ with $n=10$. True value (dots) and 95\% CI (lines). Within sample (solid lines) and out of sample (dotted lines).}
\label{fig:xf}
\end{figure}

\begin{figure}
\centering
\includegraphics[scale=0.45]{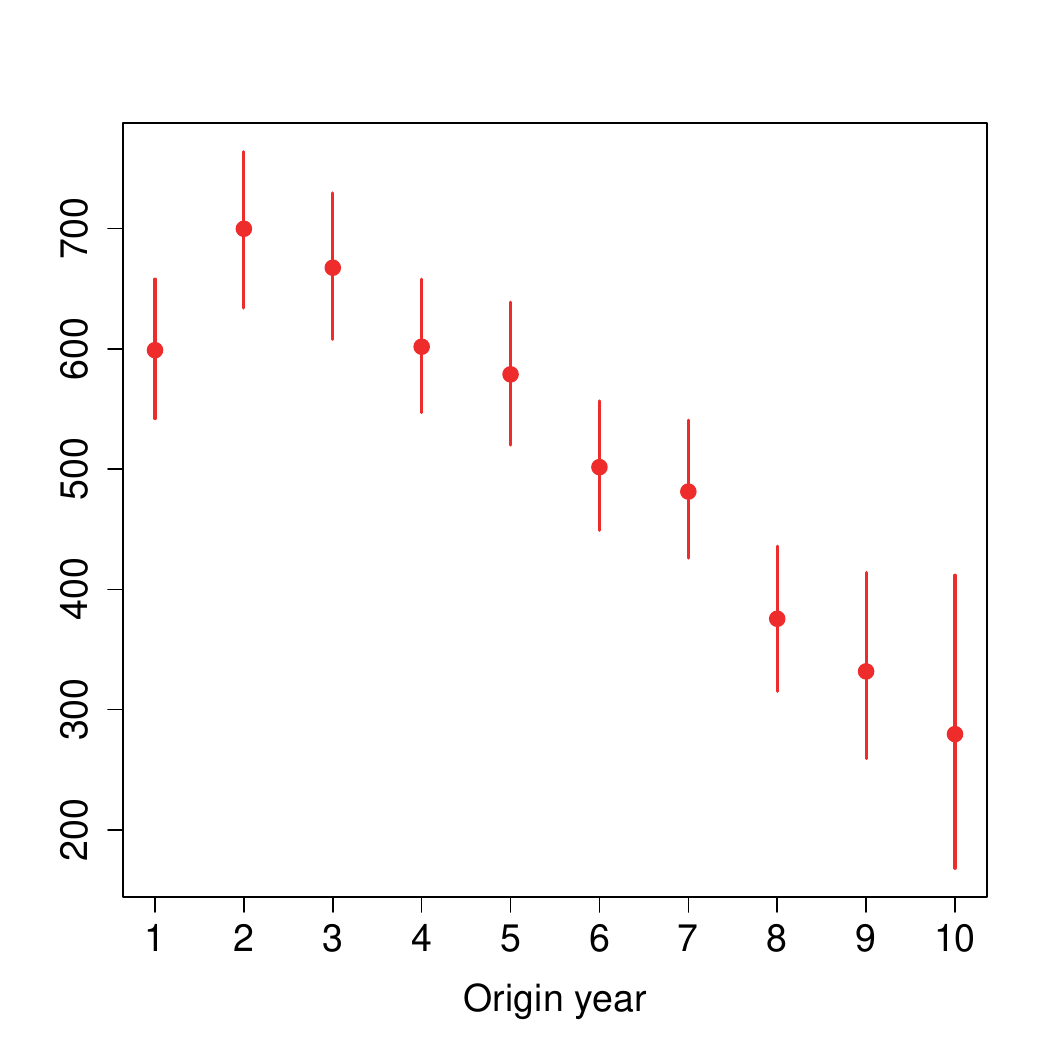}
\includegraphics[scale=0.45]{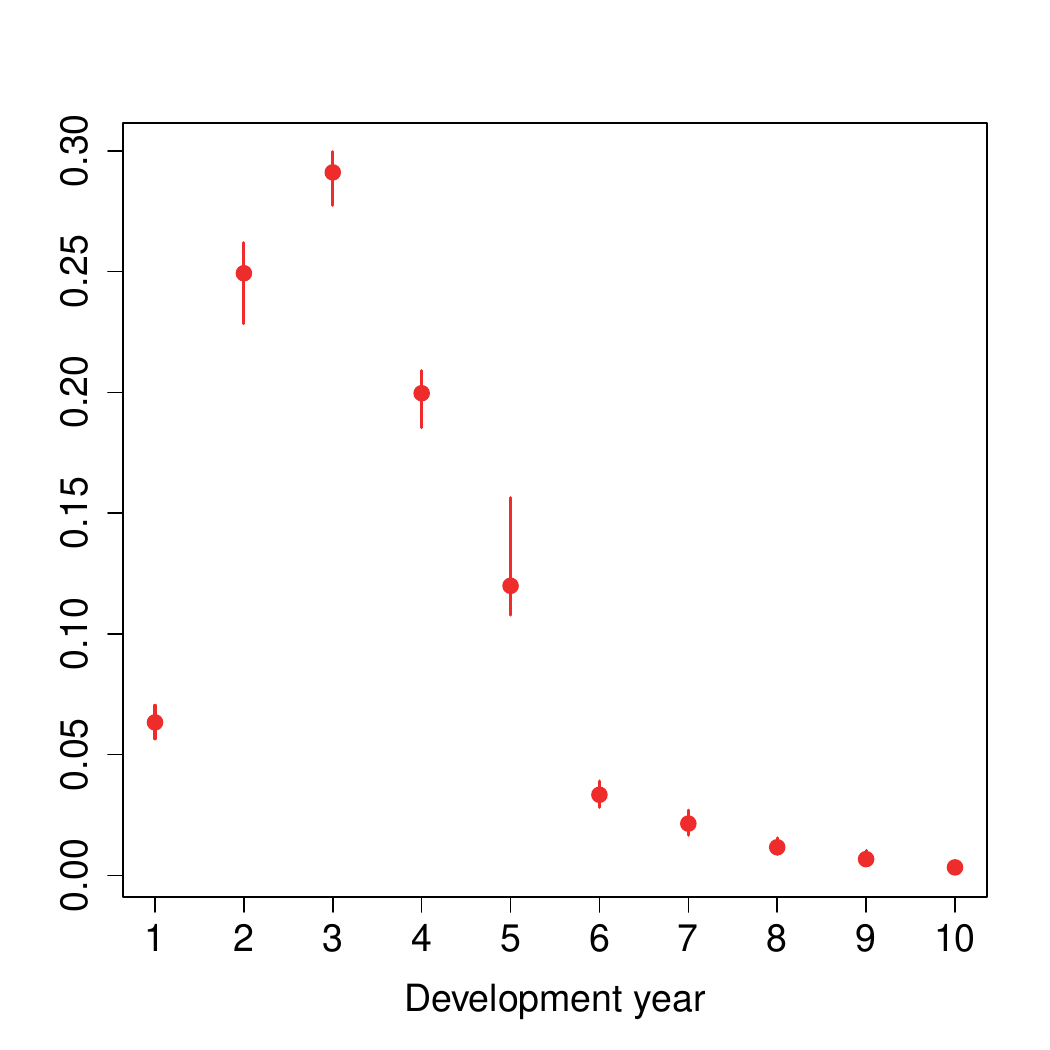}
\caption{General insurance data. Posterior estimates of parameters $\alpha_i$, $i=1,\ldots,n$ (left) and $\pi_j$, $j=1,\ldots,n$ (right) with $n=10$. Posterior mean (dots) and 95\% CI (lines).}
\label{fig:alphad}
\end{figure}

\begin{figure}
\centering
\includegraphics[scale=0.45]{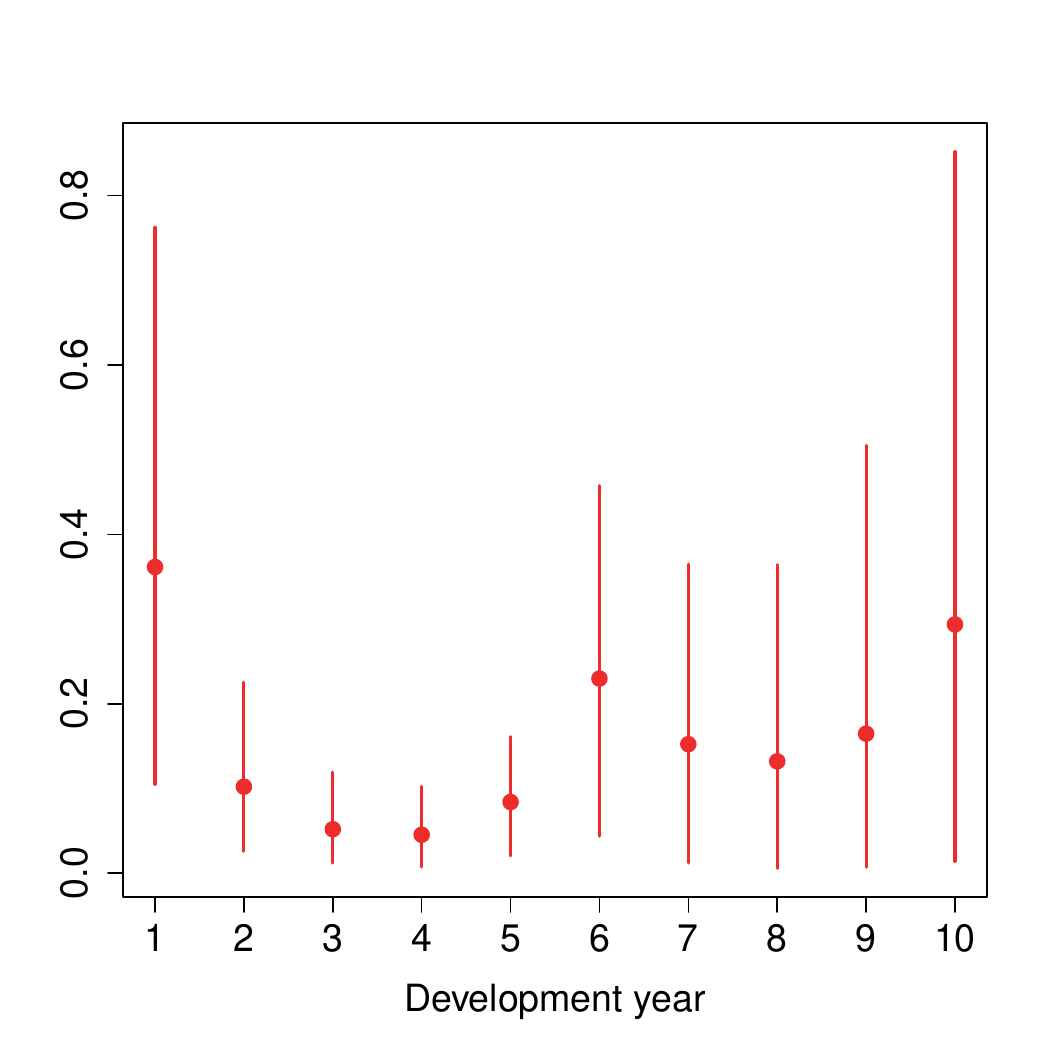}
\includegraphics[scale=0.45]{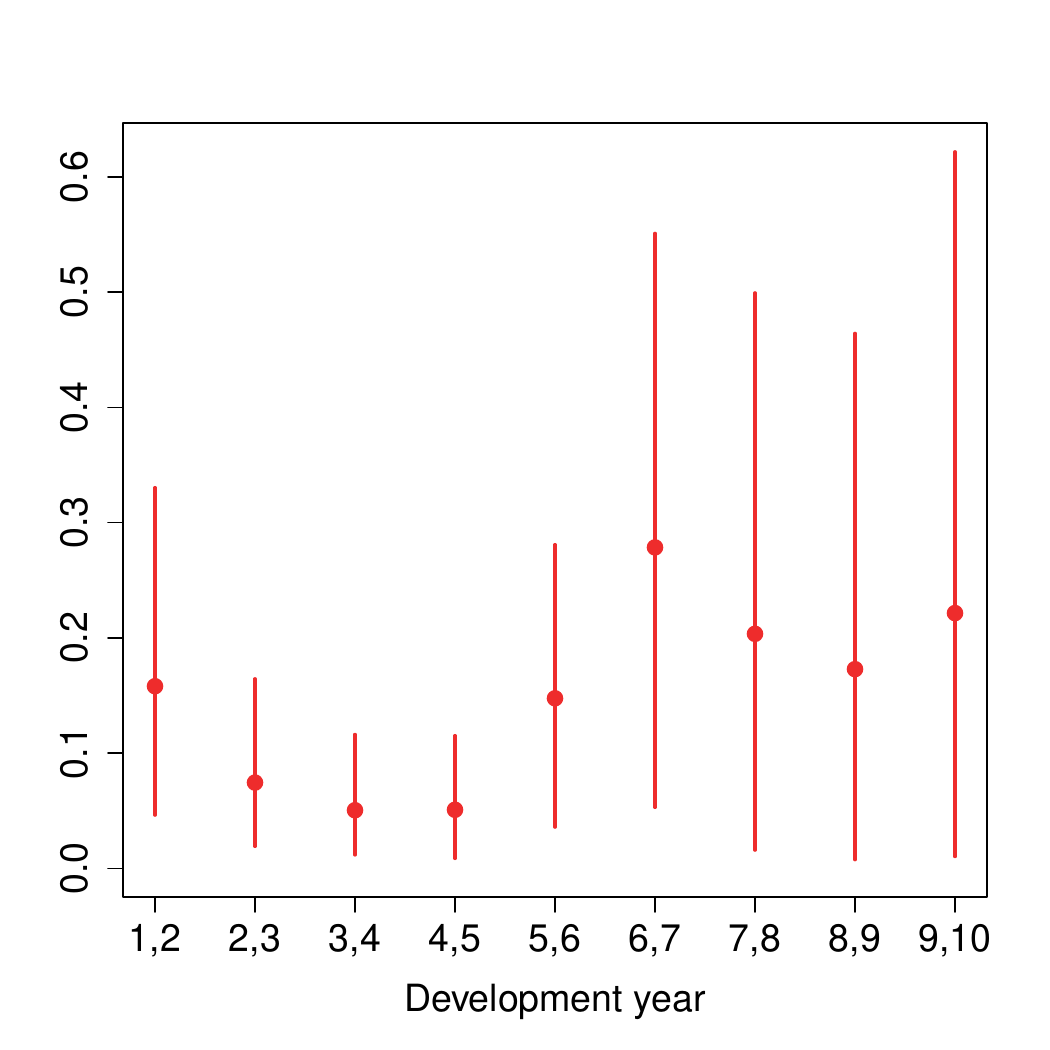}
\caption{General insurance data. Posterior estimates of parameters $\gamma_j$ (left) and $\rho_{j,j+1}$ (right) for $j=1,\ldots,n$ with $n=10$. Posterior mean (dots) and 95\% CI (lines).}
\label{fig:gammad}
\end{figure}

\begin{figure}
\centering
\includegraphics[scale=0.45]{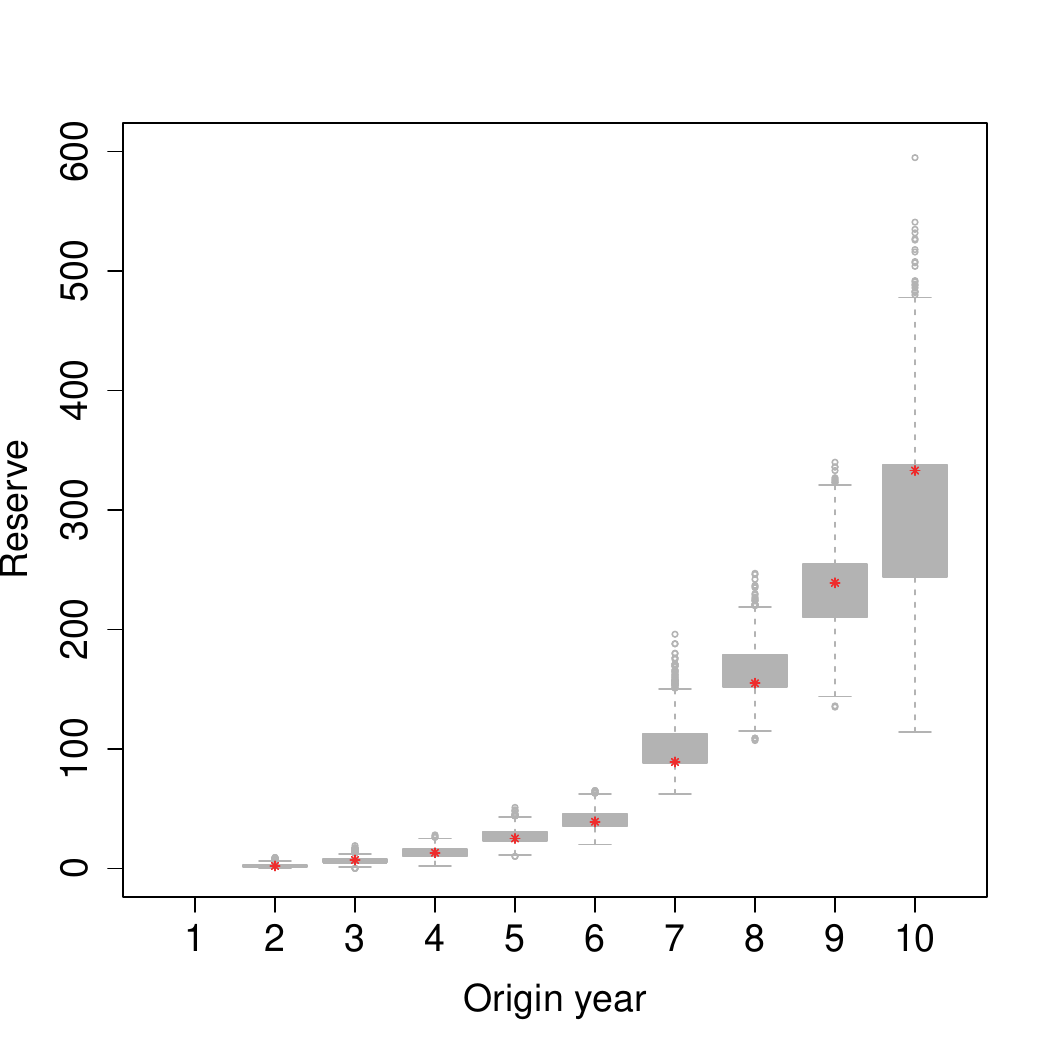}
\includegraphics[scale=0.45]{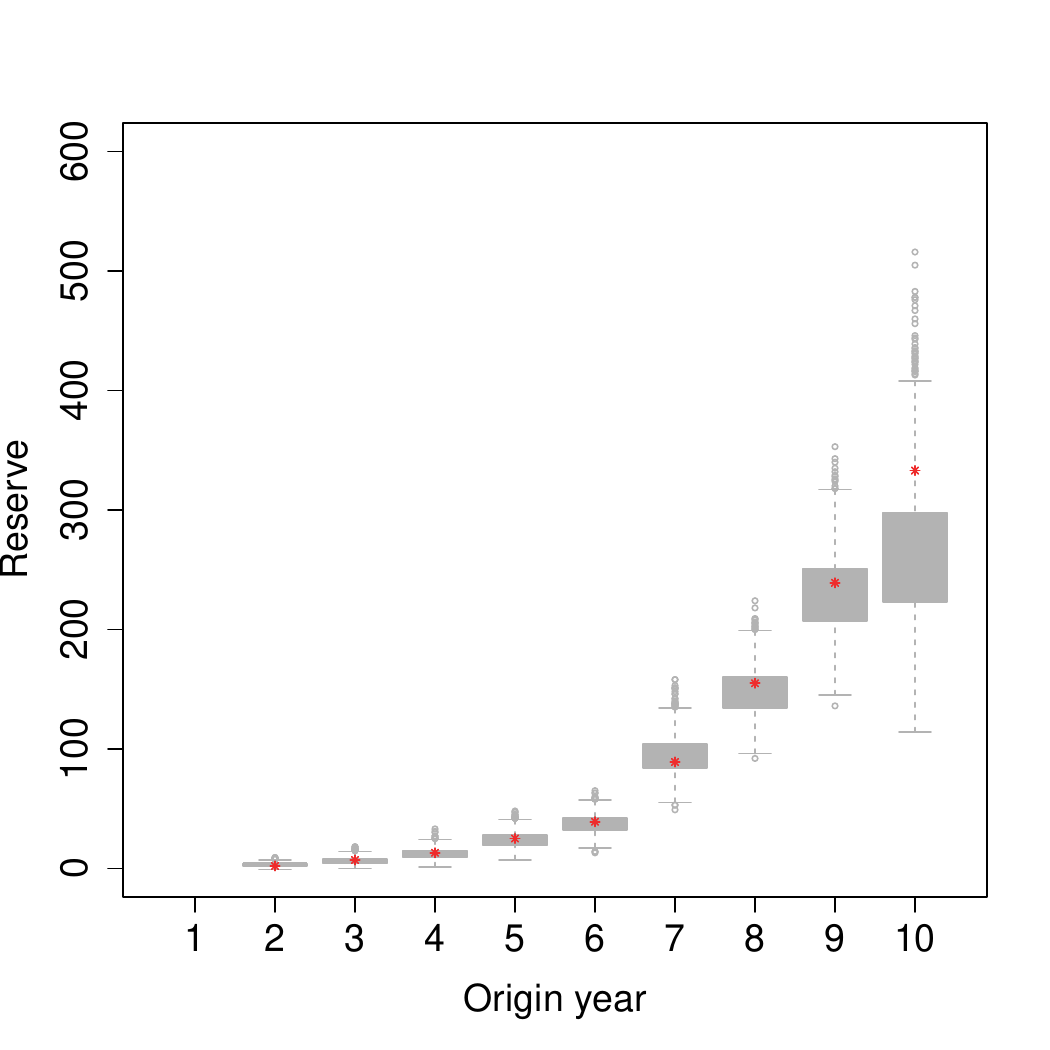}
\caption{General insurance data. Posterior predictive distributions of aggregated number of claims $N_i$, $i=1,\ldots,n$ with $n=10$. Independence model $q=0$ (left) and dependence model $q=1$ (right). Boxplots (grey) and chain ladder point estimates (asteriks).}
\label{fig:resd}
\end{figure}

\begin{figure}
\centering
\includegraphics[scale=0.45]{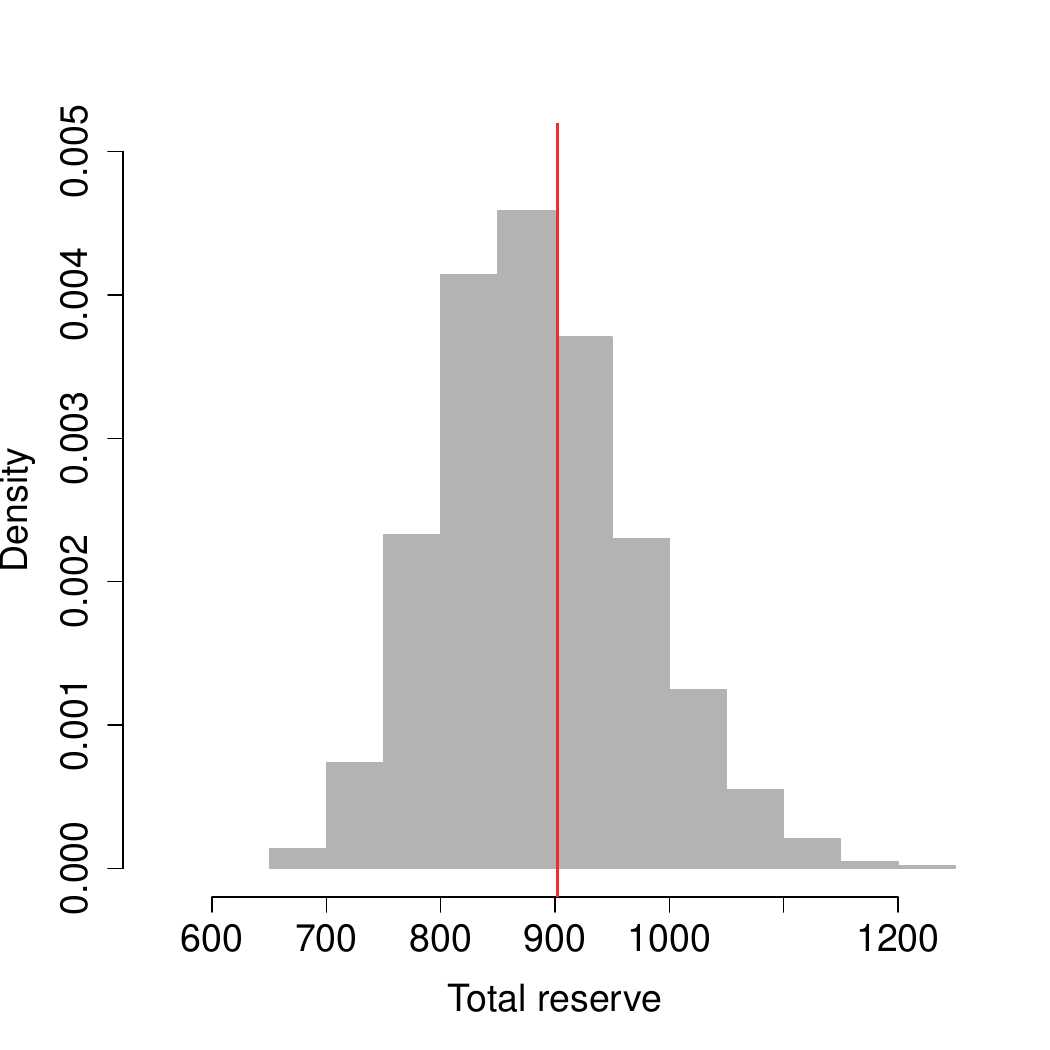}
\includegraphics[scale=0.45]{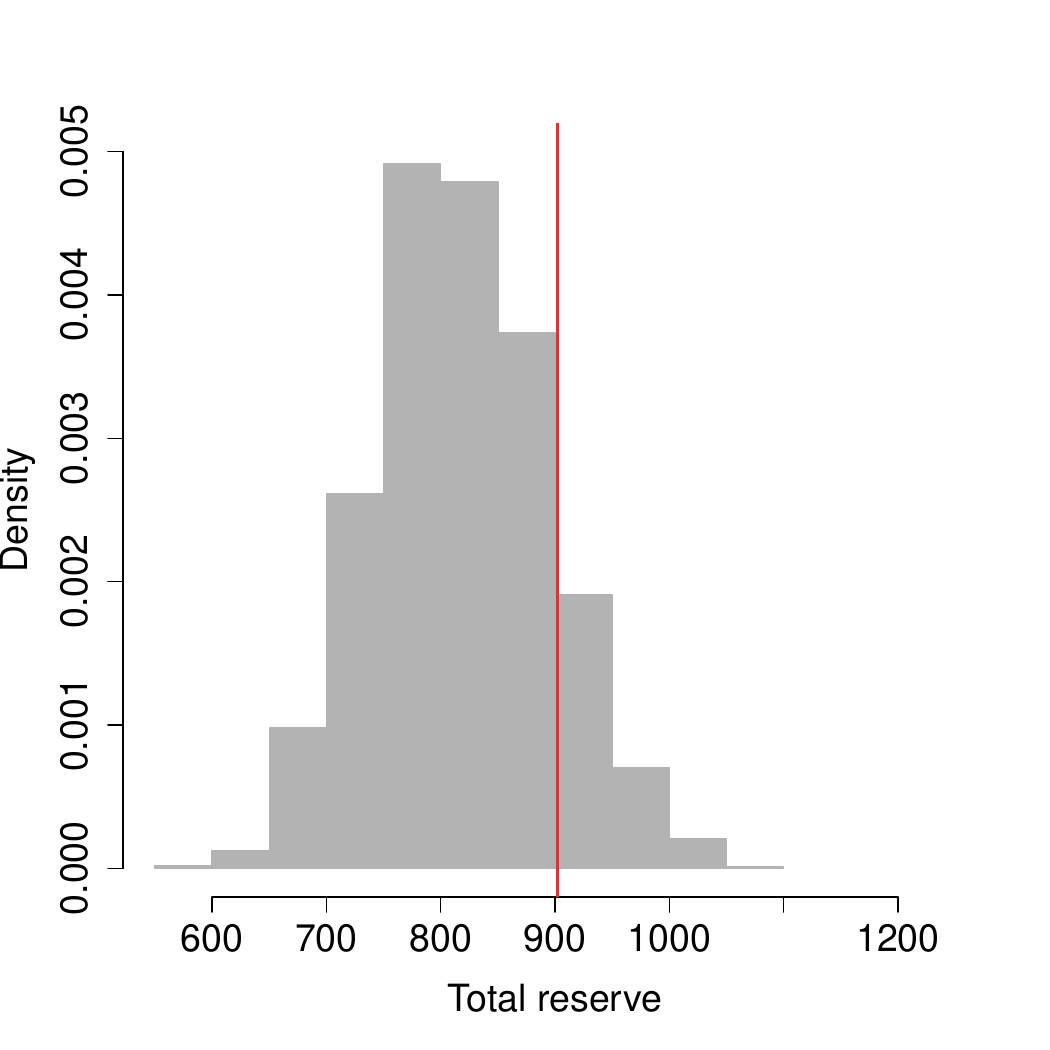}
\caption{General insurance data. Posterior predictive distributions of total aggregated number of claims $N$. Independence model $q=0$ (left) and dependence model $q=1$ (right). Histogram (grey) and chain ladder point estimates (asteriks).}
\label{fig:restotd}
\end{figure}

\begin{figure}
\centering
\includegraphics[scale=0.45]{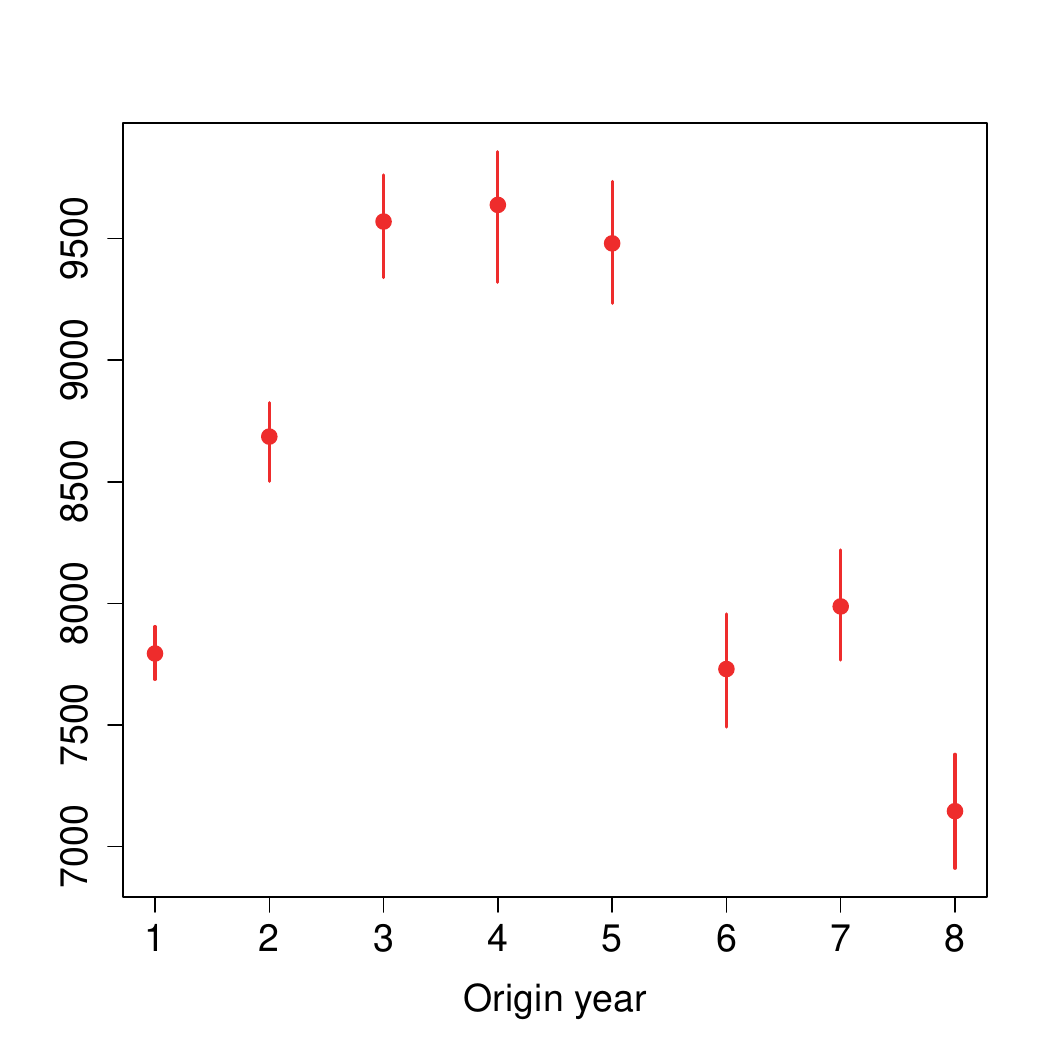}
\includegraphics[scale=0.45]{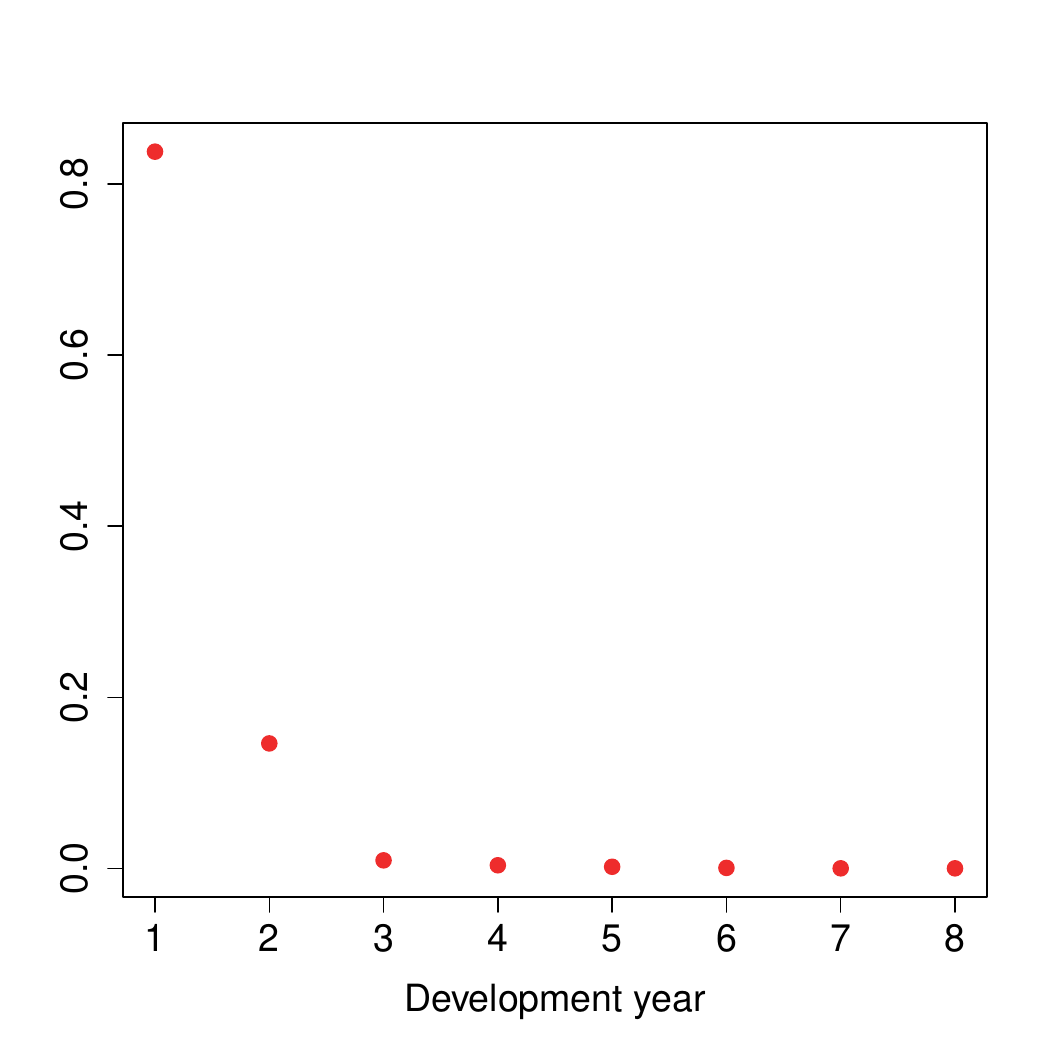}
\caption{Automobile data. Posterior estimates of parameters: $\alpha_i$, $i=1,\ldots,n$ (left) and $\pi_j$, $j=1,\ldots,n$ (right) with $n=8$. Posterior mean (dots) and 95\% CI (lines).}
\label{fig:alphaa}
\end{figure}

\begin{figure}
\centering
\includegraphics[scale=0.45]{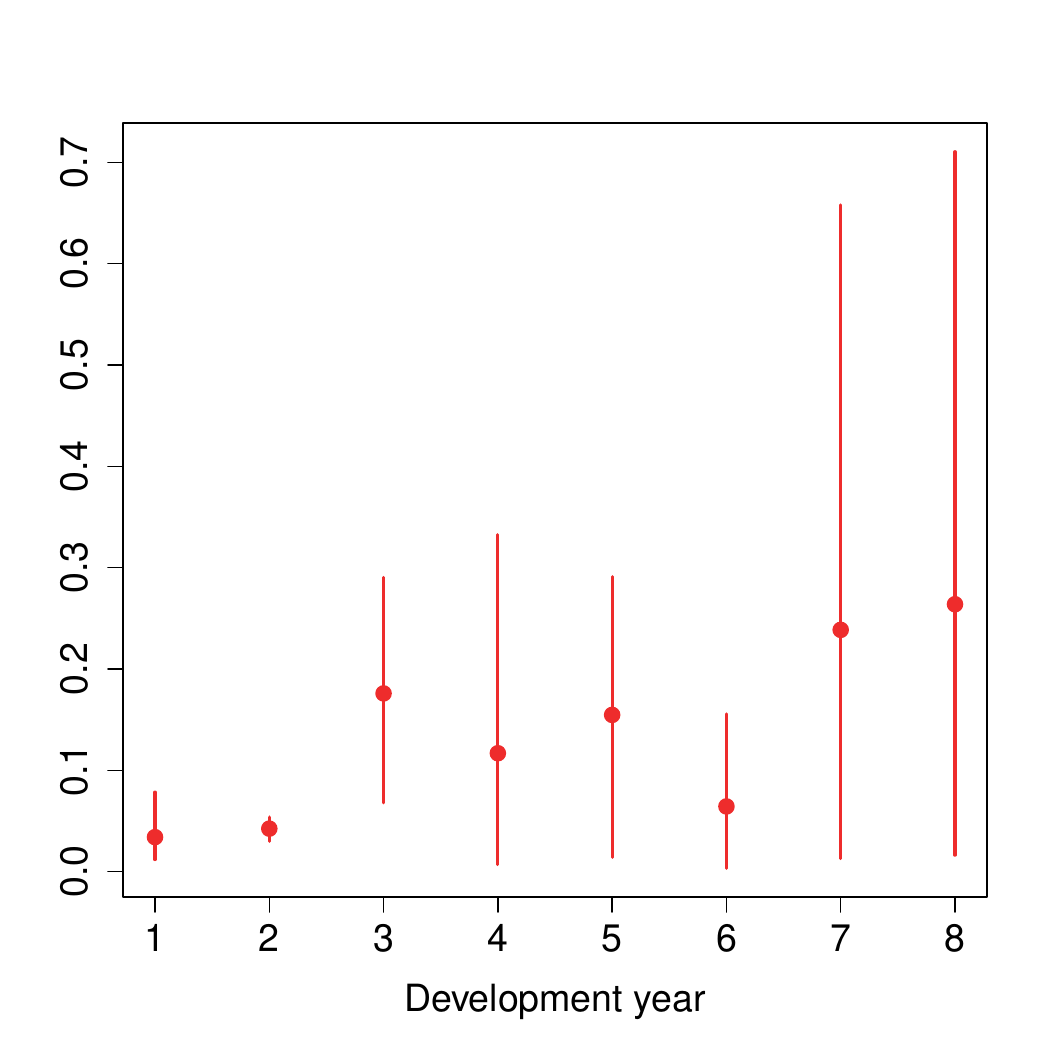}
\includegraphics[scale=0.45]{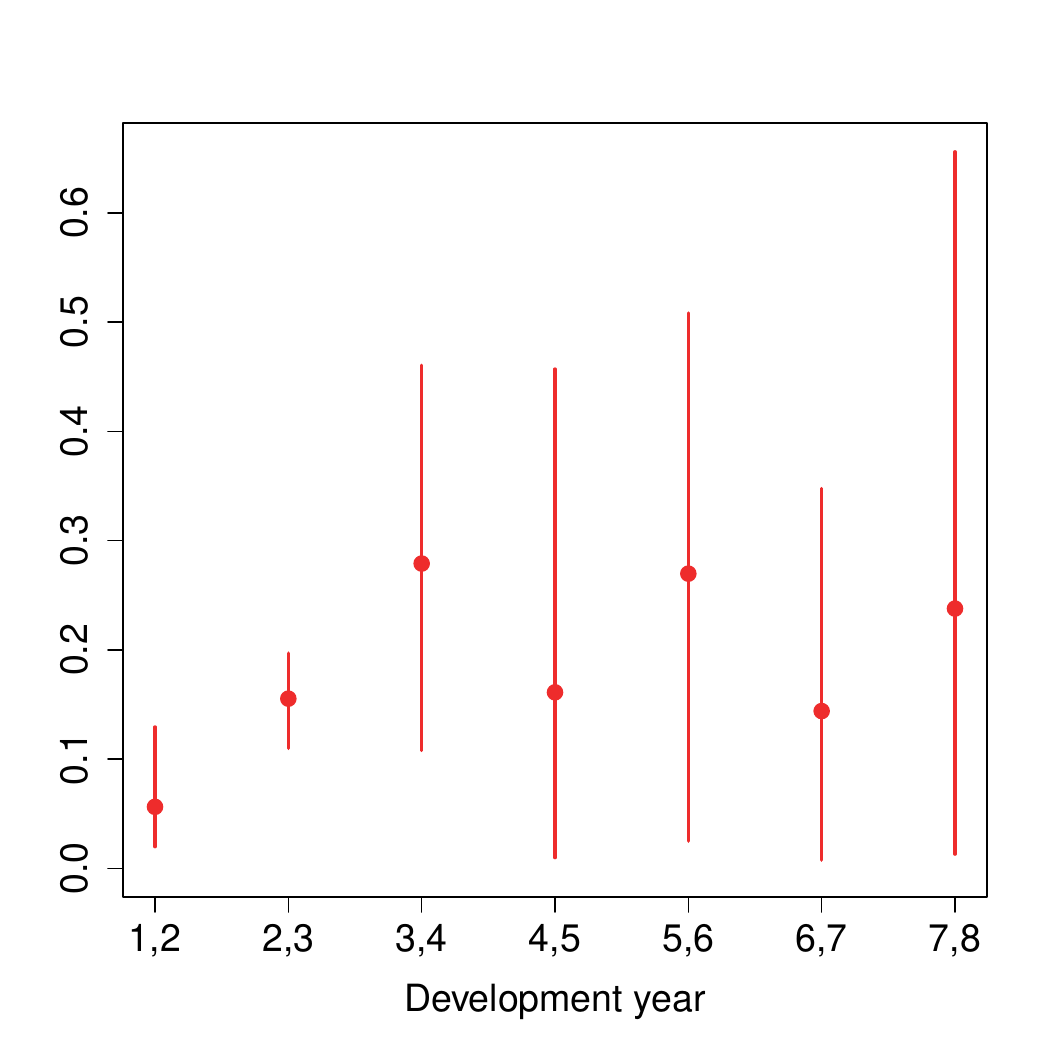}
\caption{Automobile data. Posterior estimates for parameters $\gamma_j$ (left) and $\rho_{j,j+1}$ (right) for $j=1,\ldots,n$ with $n=8$. Posterior mean (dots) and 95\% CI (lines).}
\label{fig:gammaa}
\end{figure}

\begin{figure}
\centering
\includegraphics[scale=0.45]{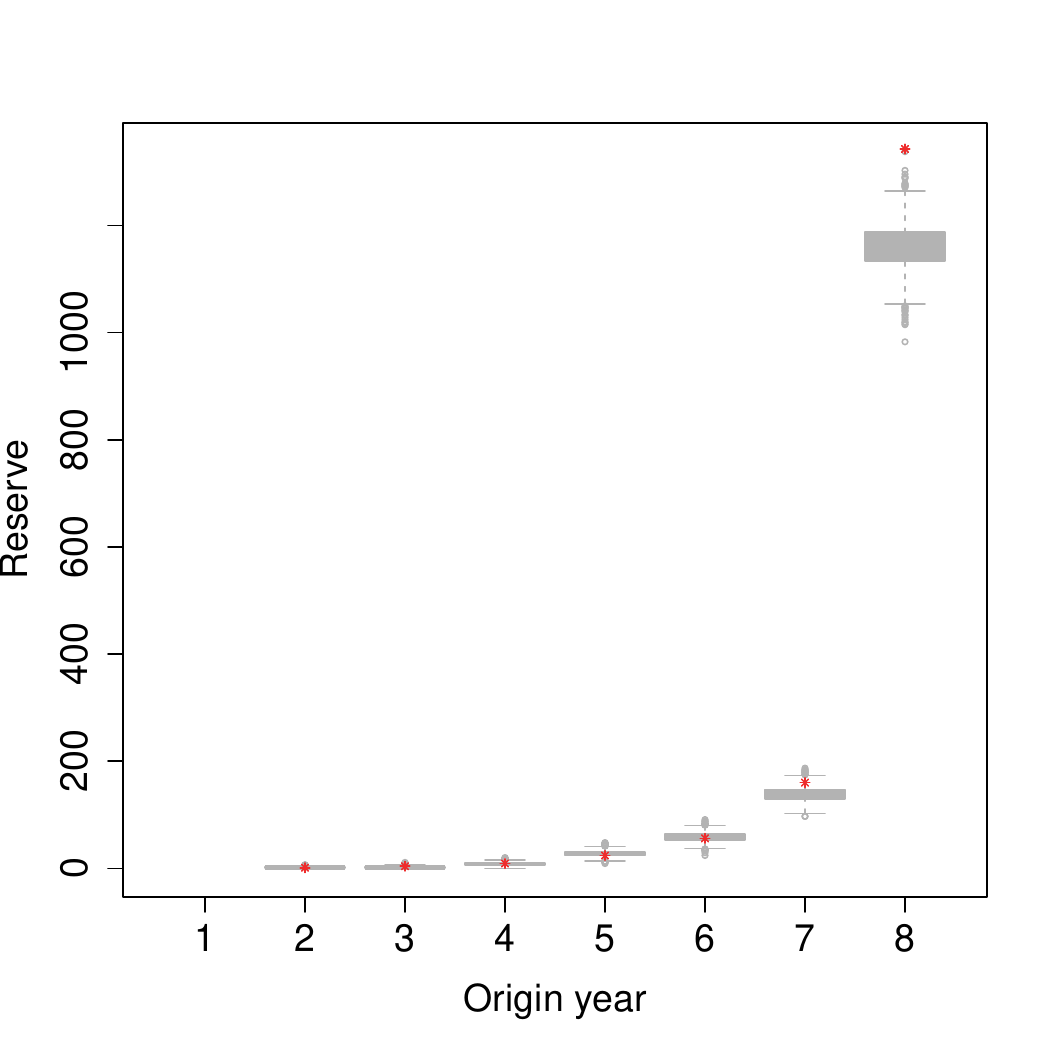}
\includegraphics[scale=0.45]{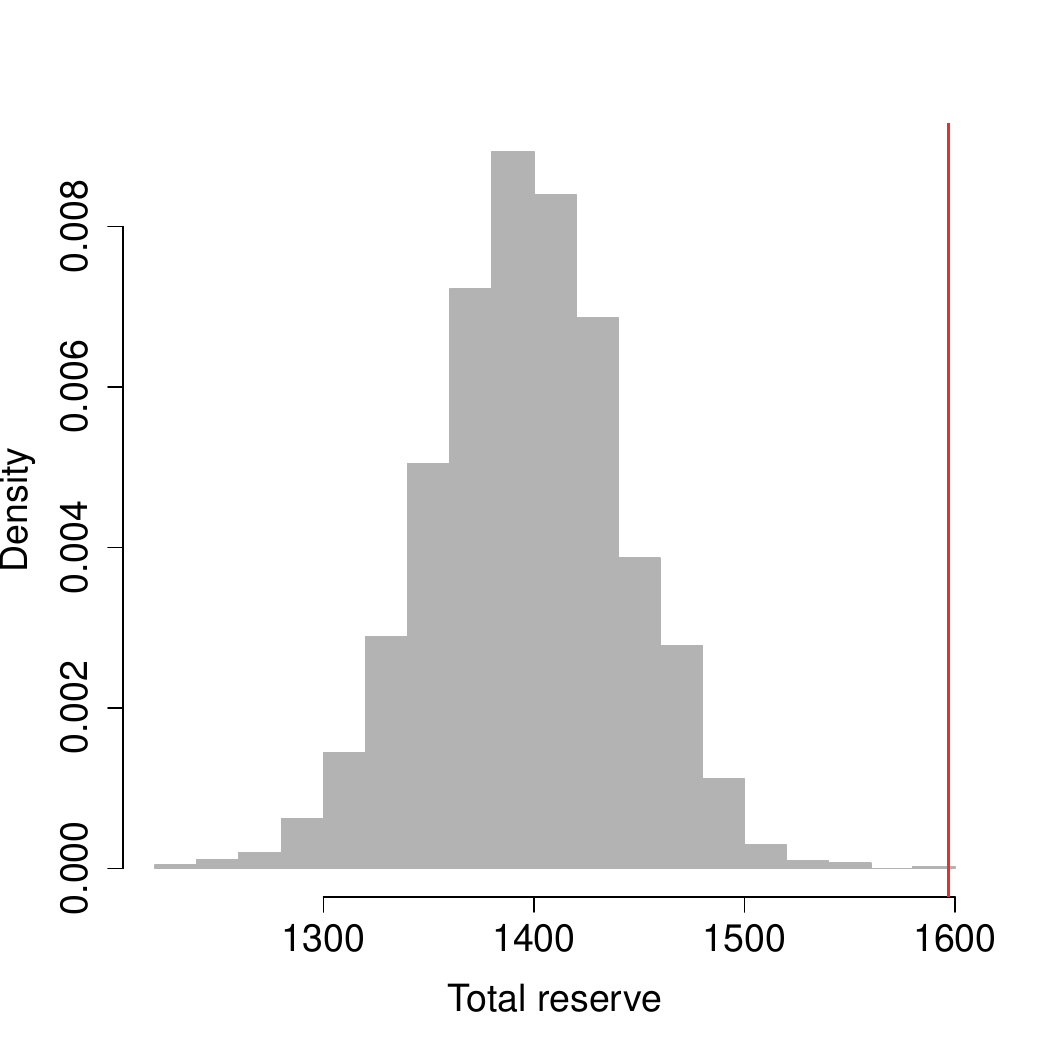}
\caption{Automobile data. Posterior predictive distributions of: aggregated number of claims $N_i$, $i=2,\ldots,n$ (left) and total aggregated number of claims $N$ (right). Histogram (grey) and chain ladder point estimates (asteriks).}
\label{fig:resa}
\end{figure}

\end{document}